\title{General Performance Evaluation for Competitive Resource Allocation Games via Unseen Payoff Estimation}
\author{
    N'yoma Diamond\textsuperscript{\rm 1,\rm 2}, Fabricio Murai\textsuperscript{\rm 1}
}
\newcommand{\CB}{\ensuremath{\mathcal{CB}}\xspace}
\newcommand{\ttsc}[1]{\texttt{\textsc{#1}}}
\newcommand{\expectation}{\mathop{\mathbb{E}}\limits}
\newcommand{\twodots}{\mathinner {\ldotp \ldotp}}
\newcommand\footnoteref[1]{\protected@xdef\@thefnmark{\ref{#1}}\@footnotemark}
\newtheorem{lemma}{Lemma}
\newtheorem{prop}{Proposition}
\crefname{prop}{Proposition}{Propositions}
\crefname{section}{\S}{\S\S}
\begin{document}

\maketitle

\begin{abstract}
Many high-stakes decision-making problems, such as those found within cybersecurity and economics, can be modeled as competitive resource allocation games. In these games, multiple players must allocate limited resources to overcome their opponent(s), while minimizing any induced individual losses. However, existing means of assessing the performance of resource allocation algorithms are highly disparate and problem-dependent. 
As a result, evaluating such algorithms is unreliable or impossible in many contexts and applications, especially when considering differing levels of feedback.
To resolve this problem, we propose a generalized definition of payoff which uses an arbitrary user-provided function. This unifies performance evaluation under all contexts and levels of feedback. Using this definition, we develop metrics for evaluating player performance, and estimators to approximate them under uncertainty (i.e., bandit or semi-bandit feedback). These metrics and their respective estimators provide a problem-agnostic means to contextualize and evaluate algorithm performance. To validate the accuracy of our estimator, we explore the Colonel Blotto (\CB) game as an example. To this end, we propose a graph-pruning approach to efficiently identify feasible opponent decisions, which are used in computing our estimation metrics. Using various resource allocation algorithms and game parameters, a suite of \CB games are simulated and used to compute and evaluate the quality of our estimates. These simulations empirically show our approach to be highly accurate at estimating the metrics associated with the unseen outcomes of an opponent's latent behavior. 
\end{abstract}

\section{Introduction}

The necessity for strategic resource allocation pervades many critical real-world domains, such as cybersecurity, economics, and epidemiology. We oftentimes have limited resources for our goals, and thus need to allocate them carefully to reach an optimal outcome. For example, the failure or success of a cyberattack may depend on the allocations of offensive or defensive resources across a variety of digital systems. If the defending entity fails to allocate enough resources to defend a vulnerable system, the attacking entity may succeed in an attack on that system. Yet, overcompensating by allocating excess resources may result in deficiencies elsewhere. In practice, accurately identifying optimal ways to efficiently allocate resources is a challenging task, especially when we lack information about the adversary's allocations---i.e., under bandit or semi-bandit feedback.

Competitive resource allocation games emerge as a natural way to model these problems. One such highly explored game is the Colonel Blotto (\CB) game. In the \CB game, two players compete by allocating limited resources to a number of battlefields with the goal of overpowering their opponent on as many of them as possible. Under this interpretation, the received payoff directly relates to the player's allocations for the round, henceforth referred to as their ``\textbf{decision}''. Substantial work has been done in the past to create algorithms that approach approximately optimal strategies for the \CB game and other analogous resource allocation games when given only minimal information to learn from. However, in order to develop and evaluate these algorithms, varying and inconsistent assumptions are made to narrow the problem space. This makes them highly scenario-specific, and thus incomparable and potentially inapplicable to many real-world problems. In particular, we failed to identify any extant literature which can be applied to the context of mutually adaptive adversaries; the broader context such that both agents may be actively attempting to actively hinder their opponent in some way. This circumstance is applicable to many real-world scenarios (such as the cybersecurity example mentioned earlier), and can be used to generalize the evaluation of resource allocation algorithms more broadly.

To remedy this gap, we develop a general definition of payoff for competitive resource allocation games with the goal of unifying performance evaluation independent of context and feedback. Furthermore, we propose a suite of practical evaluation metrics based on our general payoff definition, and practical means for approximating them under uncertainty. Focusing on the \CB game, we propose a graph-pruning approach for identifying feasible opponent decisions, which can be used to efficiently compute our estimated payoff metrics. Finally, to improve the efficiency and accuracy of our approach, we prove a number of bounds for feasible opponent allocations under semi-bandit feedback. This provides an approach through which future research and development may assess the performance of resource allocation algorithms under theoretical contexts and active real-world use. To empirically validate the effectiveness of our approach, we perform simulations of \CB games and evaluate them utilizing our proposed metrics and algorithm. Our experiments show that our approach is highly effective at estimating the true payoff metrics associated with the actual opponent behavior, even under uncertainty.

\textit{Outline.} In \cref{sec:related work} we discuss the related work on competitive resource allocation games and performance evaluation. In \cref{sec:general payoff} we propose our definition of generalized payoff. In \cref{sec:payoff metrics,sec:estimating metrics}, we propose, respectively, our evaluation metrics utilizing general payoff and their corresponding estimators. In \cref{sec:problem formulation} we present the \CB game used as a case study for our metrics. In \cref{sec:modeling with graphs} we propose a novel technique for narrowing down the set of feasible opponent decisions in the \CB game. In \cref{sec:estimating opponent allocations} we identify bounds on feasible opponent allocations under semi-bandit feedback. In \cref{sec:empirical analysis} we utilize the proposed techniques to simulate and evaluate the quality of our estimation metrics with respect to their true counterparts. In \cref{sec:conclusion} we summarize the conclusions of our work.

\section{Related Work}\label{sec:related work}

Many researchers explore the problem of strategic resource allocation through the lens of combinatorial bandits. This is a variation on the highly-explored multi-armed bandit problem, such that the agent may simultaneously pull multiple arms within some budget, as opposed to only a single arm. This problem introduces the distinction between ``bandit'' and ``semi-bandit'' feedback \cite{audibertRegretOnlineCombinatorial2014}. Under bandit feedback, the agent receives a single aggregate payoff as feedback, which makes it difficult to determine the individual impact of each component of a decision. Conversely, under semi-bandit feedback scenario, the agent receives distinct feedback about the payoffs associated with each component of a decision. This provides more information than bandit feedback, while still having uncertainty about the true behavior of how rewards are provided. 

Under combinatorial bandits, \citet{zuoCombinatorialMultiarmedBandits2021} propose two online algorithms using combinatorial decision spaces for the discrete and continuous resource cases. They consider a general reward function, making their algorithms applicable to many contexts. However, their algorithms depend on the usage of an unspecified oracle function which may not be practical or possible to implement efficiently, and thus compare against. Additionally, \citet{kocakEfficientLearningImplicit2014} propose a version of the existing \ttsc{Exp3} \cite{auerNonstochasticMultiarmedBandit2012} algorithm leveraging observability graphs to consider the potential value of unexplored decisions.

\citet{vuCombinatorialBanditsSequential2019,vuPathPlanningProblems2020} reframe the combinatorial interpretation as a path-planning problem with inspiration from the observability graphs from \citet{kocakEfficientLearningImplicit2014}. To do so, they expand on the existing \ttsc{ComBand} \cite{cesa-bianchiCombinatorialBandits2012} and \ttsc{Exp3} \cite{auerNonstochasticMultiarmedBandit2012} algorithms. \citet{vuModelsSolutionsStrategic2020} presents a comprehensive survey and analysis of the existing literature regarding combinatorial bandits and the \CB game to propose a suite of resource allocation algorithms utilizing varying levels of feedback. To the best of our knowledge, the works of \citet{vuCombinatorialBanditsSequential2019,vuPathPlanningProblems2020,vuModelsSolutionsStrategic2020} are the only present literature explicitly studying general algorithms for mutually adversarial resource allocation games. In this work we provide metrics and techniques which may be used to reliably evaluate and compare algorithms such as these.

\section{General Payoff}\label{sec:general payoff}

The payoff received by a player $p$ in the set of players $P$ is predicated on their chosen decision $\pi$ and their opponent's decision $\phi$. Thus, we denote the set of all feasible decisions (e.g., valid allocations in a round of the \CB game) available to player $p$ and their opponent as $\Pi_p$ and $\Phi_p$, respectively. Using this notation, we describe general payoff to be a function $~{L_p: \Pi_p \times \Phi_p \mapsto \mathbb{Q}}$ that returns the scalar payoff awarded to player $p$ if they play decision $\pi\in\Pi_p$ and their opponent plays decision $\phi\in\Phi_p$. Thus for a given round $t$, player $p$'s payoff for that round is calculated as
\begin{equation}
    L_p^t = L_p(\pi^t,\phi^t),
\end{equation}
where $\pi^t\in\Pi_p$ denotes the decision played by player $p$ in round $t$ and $\phi^t\in\Phi_p$ denotes the decision played by their opponent. Further, by fixing $\phi^t$ but allowing $\pi$ to vary, we introduce the following shorthand:
\begin{equation}
    L_p^t(\pi) \coloneqq L_p(\pi, \phi^t),
\end{equation}
which represents the payoff that player $p$ would receive for playing a specific decision $\pi$ in round $t$.

Note that a generalized interpretation of regret can be produced trivially by taking the difference of the general payoff between any two possible games. That is, for a comparison of two arbitrary player decisions against one arbitrary opponent decision, generalized regret may be calculated as $L_p(\pi,\phi)-L_p(\pi',\phi)$; as $L_p^t(\pi)-L_p^t(\pi')$ given a fixed opponent decision for round $t$; or as $L_p^t(\pi) - L_p^t$ when also given a fixed player decision for round $t$.

\section{Payoff Metrics}\label{sec:payoff metrics}

Using our generalized definition of payoff, we propose two useful metrics: Max Payoff and Expected Payoff. \textbf{Max Payoff} is defined as the maximum possible (i.e., optimal) payoff that can be received by player $p$ against a particular decision by opponent $p'$. It can be formally expressed as the function
\begin{equation}
    L_p^*(\phi) \coloneqq \max_{\pi\in\Pi_p} L_p(\pi,\phi).
\end{equation}
For a fixed opponent decision $\phi^t$, we denote it as:
\begin{equation}
    L_p^{*t} \coloneqq \max_{\pi\in\Pi_p} L_p^t(\pi).
\end{equation}
It is important to note that $L_p^{*t}$ represents the maximum possible payoff that player $p$ can achieve for a given round $t$. Thus this metric is highly valuable towards computing regret and any associated metrics.

Let the player's decision in a given round $t$ be a random variable $\pi \sim \mathcal{D}^t(\Pi_p)$, where $\mathcal{D}^t(\Pi_p)$ is the probability distribution over the player's decisions in $\Pi_p$ in round $t$. $\mathcal{D}^t$ is parameterized by $t$ because the distribution may change depending on the game's history (such as when the players are adaptive adversaries). Using this distribution, we define the \textbf{Expected Payoff} as player $p$'s expectation of payoff over $\mathcal{D}^t(\Pi_p)$. Formally, it is given by
\begin{equation}
    \widehat{L}_p(\phi) \coloneqq \expectation_{\pi\sim \mathcal{D}^t(\Pi_p)}\left[L_p(\pi,\phi)\right].
\end{equation}
Over a large number of plays against the same opponent decision, the player's average payoff will approach the Expected Payoff. Therefore this metric can be used as a reference point to identify whether an algorithm is behaving as intended or is outperforming the expected (average) performance of another algorithm.
Yet again, we can leverage our previously described shorthand by fixing the opponent decision $\phi^t$:
\begin{equation}
    \widehat{L}_p^t \coloneqq \expectation_{\pi\sim \mathcal{D}^t(\Pi_p)}\left[L_p^t(\pi)\right].
\end{equation}

We note that the necessary analysis to identify $\mathcal{D}^t(\Pi_p)$ is outside the scope of this paper. Therefore, for the purpose of experimentation, we addres this using the Uniform Decision Assumption described in \cref{sec:estimating metrics}.

\section{Estimating Metrics Under Uncertainty}\label{sec:estimating metrics}

The key challenge in computing the proposed metrics is that we often do not observe $\phi$. Thus, to approximate them under uncertainty (i.e., bandit or semi-bandit feedback), we propose three associated estimation metrics: Observable Max Payoff, Supremum Payoff, and Observable Expected Payoff. These metrics are agnostic to whether the player receives bandit or semi-bandit feedback. This is done by identifying and using a set of feasible opponent decisions that would result in the observed game. That is, given a player $p$, a round $t$, a decision $\pi^t\in\Pi_p$, and some round-specific feedback (such as an observed payoff $L_p^t$ or a vector of payoffs associated with a decision), the player computes a set of feasible opponent decisions $\Phi_p^t$. Given a fixed player decision $\pi^t$, any and all decisions $\phi\in\Phi_p^t$ must produce the same feedback as that observed for round $t$. Assuming the user has sufficient knowledge of the nature of the game, it should always be possible to identify feasible opponent decisions in some capacity. A specific approach using the \CB game as an example is discussed in \cref{sec:modeling with graphs,sec:estimating opponent allocations}.

We propose two approximations of Max Payoff, each with distinct purposes: Observable Max Payoff and Supremum Payoff. We define \textbf{Observable Max Payoff} as the expectation of Max Payoff over $\mathcal{D}^t(\Phi_p^t)$: 
\begin{equation}
    L_p^{*t} \approx \expectation_{\phi \sim \mathcal{D}^t(\Phi_p^t)} \left[\max_{\pi\in\Pi_p} L_p(\pi,\phi)\right].
\end{equation}
Observable Max Payoff may be used as a direct approximation of the true value of Max Payoff. This is because over a large number of rounds the running average of Max Payoff for a particular opponent decision should approach the Observable Max Payoff, since the latter is the expectation of the former (assuming $\mathcal{D}^t(\Phi_p^t)$ does not change significantly). 

We also specify the \textbf{Supremum Payoff} as the minimum possible Max Payoff over $\Phi_p^t$: 
\begin{equation}
    \sup(L_p^t) \coloneqq \min_{\phi\in\Phi_p^t} \left[\max_{\pi\in\Pi_p} L_p(\pi,\phi)\right].
\end{equation}
Supremum Payoff represents the minimal upper bound on possible payoff. In other words, it the Supremum Payoff is the best possible payoff that player $p$ can receive if their opponent played their best possible decision. This is the worst-case scenario for $p$, where the opponent's decision minimizes $p$'s best possible payoff. Additionally, Supremum Payoff represents the maximum payoff the player can guarantee is achievable in a particular round, given the information available to them. Hence, it may be valuable to use Supremum Payoff as an objective function for player $p$. Furthermore, Supremum Payoff is guaranteed to be equivalent to or underestimate the true Max Payoff, making it a pessimistic metric. However, in situations where the opponent has substantially more resources than the player, the value of Supremum Payoff may approach the observed payoff $L_p^t$, as the opponent will have many more possible decisions which favor them. That is, it is likely that there exists an opponent decision that makes it impossible for the player to improve their payoff. One of the notable benefits of Supremum Payoff over Observable Max Payoff is that it does not rely on knowing the nature of $\mathcal{D}^t(\Phi_p^t)$.


Similar to Observable Max Payoff, in order to approximate Expected Payoff, we compute its expectation over $\Phi_p^t$. We refer to this estimation metric as \textbf{Observable Expected Payoff}. This is equivalent to the expectation of payoff with respect to both $\mathcal{D}^t(\Pi_p)$ and $\mathcal{D}^t(\Phi_p^t)$:
\begin{equation}
    \widehat{L}_p^t \approx \expectation_{\phi \sim \mathcal{D}^t(\Phi_p^t),\pi \sim \mathcal{D}^t(\Pi_p)}\left[L_p(\pi,\phi)\right].
\end{equation}

\subsection{The Uniform Decision Assumption}\label{sec:UDA}

In practice, it is often functionally impossible to identify the true nature of $\mathcal{D}^t(\Phi_p)$ without knowing the opponent's algorithm and its parameters. Hence, to compute our estimation metrics, we introduce the Uniform Decision Assumption (\textbf{UDA}). Under the UDA, we assume that the true distribution of opponent decisions $\mathcal{D}^t(\Phi_p)$ is approximately equivalent to the uniform distribution. 
Clearly, adversarial algorithms do not produce decisions uniformly. However, we believe the UDA to be an adequate means to enable meaningful approximation of opponent behavior when lacking a method of identifying $\mathcal{D}^t(\Phi_p)$. We implicitly validate the effects of this assumption in our experimental analysis of our estimation metrics, as any significant error resultant from using the UDA should cause the quality of any estimates dependent on it to be highly inaccurate. Specifically, the UDA is used when computing Observable Max Payoff and Observable Expected Payoff due to requiring knowledge of $\mathcal{D}^t(\Phi_p)$ (and $\mathcal{D}^t(\Pi_p)$ in the case of Observable Expected Payoff).

\section{Formulation of the \CB Game Example}\label{sec:problem formulation}

To explore the usage and efficacy of our metrics and estimates, we consider the example of the \CB game. An instance of the \CB game is defined as a two-player repeated constant-sum game of (potentially unknown) length $T\in\mathbb{N}_1$ such that the set of players is denoted by $P=\{A, B\}$. Player $p\in P$ and their opponent $p'$ are allotted some fixed number of resources $N_p,N_{p'}\in\mathbb{N}_1$. Let $K\in\mathbb{N}_1$ be the number of battlefields, such that each battlefield can be represented by an integer $i\in I =[1 \twodots K]$. For each round $t\in[1 \twodots T]$, the resource allocation by player $p$ to a battlefield $i\in I$ is denoted $\pi^t_i$, while their opponent's allocation is denoted $\phi^t_i$. The sum of allocations by any player $p$ in a given round $t$ are fixed, such that $\sum_{i=1}^K \pi^t_i = N_p$. We specify that all allocations are discrete, such that $\pi^t_i, \phi^t_i\in \mathbb{N}_0$. For each round all of the players' resources are renewed and a static one-shot \CB game is played in which each player produces a decision (i.e., vector of allocations) denoted $\pi^t$ for player $p$ and $\phi^t$ for $p'$, such that
\begin{align}
    \pi^t &\coloneqq \langle \pi^t_1, \pi^t_2, \ldots, \pi^t_K \rangle, \\
    \phi^t &\coloneqq \langle \phi^t_1, \phi^t_2, \ldots, \phi^t_K \rangle.
\end{align}

To enforce the constant-sum nature of the game, we enforce a bias when both players allocate the same number of resources to a battlefield (i.e., a draw). Without loss of generality, we specify the variable $\delta_p\in\{0,1\}$ indicating whether player $p$ wins ($1$) or loses ($0$) draws. Thus for a given battlefield $i$ and round $t$, if $\pi^t_i + \delta_p > \phi^t_i$ then player $p$ wins the battlefield. Otherwise, their opponent $p'$ wins the battlefield. Thus, the payoff function used with the \CB game is 
\begin{equation}
    L_p(\pi,\phi) \coloneqq \sum_{i\in I}[\pi_i + \delta_p > \phi_i].
\end{equation}

Given a player $p$, for each battlefield $i$ in round $t$, we denote the associated payoff as $\ell^{t,i}_p$ such that $\ell^{t,i}_p = 1$ if $p$ won the battlefield, or $\ell^{t,i}_p = 0$ if they lost. The vector of payoffs $\mathcal{L}_p^t$ awarded to a player $p$ in a given round is denoted as
\begin{equation}
    \mathcal{L}_p^t \coloneqq \langle \ell^{t,1}_p, \ell^{t,2}_p, \ldots, \ell^{t,K}_p \rangle,
\end{equation}
and the total scalar payoff received by player $p$ at round $t$ is
\begin{equation}
    L_p^t = L_p(\pi^t,\phi^t) = \sum_{i=1}^K \ell^{t,i}.
\end{equation}
Note that computing Max Payoff for the \CB game is trivial as explained in \cref{supp:max payoff alg}.\footnote{\label{foot:supplementary material}See supplementary materials.}

Under bandit feedback, player $p$ only receives their total payoff $L_p^t$, while under semi-bandit feedback, they receive the vector payoff $\mathcal{L}_p^t$. For the purposes of our exploration in this paper, we choose to focus on semi-bandit feedback to provide greater ability to narrow down the set of feasible opponent decisions via bounding the possible allocations, making our estimates more accurate and easier to compute.

This formulation is a slight modification of that used by \citet{robersonColonelBlottoGame2006} and \citet{schwartzHeterogeneousColonelBlotto2014}. Our alterations are as follows: Firstly, we do not assume that the player that loses draws must have the same or fewer number of resources compared to their opponent. Secondly, we allow the player to provided a vector containing the respective payoffs for each battlefield, instead of the aggregate score (i.e., semi-bandit instead of bandit feedback).

\section{Modeling Feasible \CB Games}\label{sec:modeling with graphs}

We adapt the path-planning approach from \citet{vuCombinatorialBanditsSequential2019} to model feasible opponent decisions within the \CB game. This model utilizes a graph-based interpretation of the \CB to efficiently explore varying decisions and their associated outcomes. Notably, this approach accommodates varying levels of feedback received by the player (i.e., bandit or semi-bandit). Using this model, we develop an efficient technique for generating the set of decisions available to a player or their opponent in each round of the \CB game.

\subsection{The \CB Decision Graph}

For a \CB game with $K$ battlefields indexed by $I=[1\twodots K]$, we focus on a given player $p$ with resources $N_p$. \citet{vuCombinatorialBanditsSequential2019} have shown that there exists a DAG $G_{K,N_p}$, which we henceforth refer to as a ``decision graph,'' such that the set of all decisions $\Pi_p$ playable by $p$ maps one-to-one against the set of all paths through $G_{K,N_p}$ (see \cref{fig:graph_example}). 

Let $\mathcal{V}$ be the set of all vertices in the graph. Each vertex has two coordinates $i$ and $n$ representing its position, such that $v_{i,n}\in\mathcal{V}$ is a vertex located at coordinate $(i,n)$. Values of $i$ and $n$ are discrete, such that $i \in \{0\} \cup I$ and $n \in [0\twodots N_p]$. Thus $i$ represents a particular battlefield in the set of battlefields $I$, plus an additional $i=0$ position, and $n$ represents the cumulative amount of resources allocated on the path up to and including a given vertex. Vertices are present at every point in space $[1 \twodots K-1] \times [0 \twodots N_p]$, in addition to the vertices $s \coloneqq v_{0,0}$ and $d \coloneqq v_{K,N_p}$. That is,
\begin{equation}
    \mathcal{V} \coloneqq \{v_{i, n}\ |\ i\in I\setminus\{K\} \wedge n\in[0\twodots N_p] \} \cup \{s,d\}.
\end{equation}

Every vertex $v_{i,n}$ such that $i>0$ (i.e., where $v_{i,n} \neq s$) has inward edges from all vertices $v_{i-1,n'}$ where $n'\in [0 \twodots n]$. 
For any directed edge connecting a vertex $v_{i-1,n'}$ to vertex $v_{i,n}$, denoted $v_{i-1,n'} \to v_{i,n}$, we assign a weight of $n-n'$. In doing so, the weight observed when moving along any edge represents the player's allocation $\pi_i$ (i.e., $\pi_i = n-n'$) to battlefield $i$. Thus the set of all edges $\mathcal{E}$ is
\begin{equation}
    \mathcal{E} \coloneqq \{v_{i-1,n'} \to v_{i,n}\ |\ v_{i-1,n'},v_{i,n} \in \mathcal{V} \wedge n' \leq n \}.
\end{equation}

\begin{figure}[t]
\centering
\resizebox{\linewidth}{!}{
\begin{tikzpicture}[
shorten <>/.style = {shorten >=#1mm, shorten <=#1mm},
brc/.style = {decorate, decoration={brace, amplitude=3mm}},
vertex/.style = {fill=black, draw=black, shape=circle, minimum size=2.5mm, inner sep=0mm},
edge/.style = {draw={rgb,255: red,160; green,160; blue,160}, ->, line width=0.5mm},
path1/.style = {->, dashed, line width=0.75mm, draw=blue},
path2/.style = {->, dashed, line width=0.75mm, draw=red}
]
	\begin{pgfonlayer}{nodelayer}
		\draw (0, 6) node [vertex, label=left:{$s$}] (0) {};
	
        \draw (0, 4) node [vertex, label={[label distance=-2mm]above right:{(1,0)}}] (1)  {};
		\draw (2, 4) node [vertex, label=right:{(1,1)}] (2)  {};
		\draw (4, 4) node [vertex, label=right:{(1,2)}] (3)  {};
		\draw (6, 4) node [vertex, label=right:{(1,3)}] (4)  {};
        \draw (8, 4) node [vertex, label={[label distance=-2mm]above right:{(1,4)}}]       (5)  {};
  
		\draw (0, 2) node [vertex, label={[label distance=-2mm]above right:{(2,0)}}] (6)  {};
		\draw (2, 2) node [vertex, label=right:{(2,1)}] (7)  {};
		\draw (4, 2) node [vertex, label=right:{(2,2)}] (8)  {};
		\draw (6, 2) node [vertex, label=right:{(2,3)}] (9)  {};
        \draw (8, 2) node [vertex, label={[label distance=-2mm]above right:{(2,4)}}] (10) {};
  
		\draw (8, 0) node [vertex, label={[label distance=-0.5mm]right:{$d$}}] (11) {};

        \node (12) at (8, 6) {};
	\end{pgfonlayer}
	\begin{pgfonlayer}{edgelayer}
		\draw [edge] (0) -- (1);
		\draw [path1] (0) -- (2);
		\draw [edge] (0) -- (3);
		\draw [edge] (0) -- (4);
		\draw [path2] (0) -- (5);
  
		\draw [edge] (1) -- (6);
		\draw [edge] (1) -- (7);
		\draw [edge] (1) -- (8);
		\draw [edge] (1) -- (9);
		\draw [edge] (1) -- (10);

        \draw [path1] (2) -- (7);
		\draw [edge] (2) -- (8);
		\draw [edge] (2) -- (9);
		\draw [edge] (2) -- (10);

        \draw [edge] (3) -- (8);
		\draw [edge] (3) -- (9);
		\draw [edge] (3) -- (10);

        \draw [edge] (4) -- (9);
		\draw [edge] (4) -- (10);

        \draw [path2] (5) -- (10);

        \draw [edge] (6) -- (11);
        \draw [path1] (7) -- (11);
        \draw [edge] (8) -- (11);
        \draw [edge] (9) -- (11);
        \draw [path2] (10) -- (11);

        \draw[brc, shorten <>=1, decoration={raise=1cm}] (12.east) -- (5.east) node [midway, right=1.3cm] {Battlefield $i=1$};
        \draw[brc, shorten <>=1, decoration={raise=1cm}] (5.east) -- (10.east) node [midway, right=1.3cm] {Battlefield $i=2$};
        \draw[brc, shorten <>=1, decoration={raise=1cm}] (10.east) -- (11.east) node [midway, right=1.3cm] {Battlefield $i=3$};
        \draw[brc, decoration={raise=4.7cm}] (12.east) -- (11.east) node [midway, right=5cm] {$K=3$};
        \draw[brc, decoration={raise=0.3cm}] (0.north) -- (12.north) node [midway, above=0.6cm] {$N_p=4$};
	\end{pgfonlayer}
\end{tikzpicture}
}
\caption{Decision graph $G_{3,4}$ for a game with $K=3$ battlefields given $N_p=4$ resources. Blue path represents the decision $\langle1,0,3\rangle$; red path represents the decision $\langle4,0,0\rangle$.}
\label{fig:graph_example}
\end{figure}

\begin{figure*}[b]
\centering
\begin{subfigure}{0.33\linewidth}
\caption{Allocation bound pruning}
\label{subfig:bound pruning}
\resizebox{\linewidth}{!}{%
\begin{tikzpicture}[
shorten <>/.style = {shorten >=#1mm, shorten <=#1mm},
brc/.style = {decorate, decoration={brace, amplitude=3mm}},
vertex/.style = {fill=black, draw=black, shape=circle, minimum size=2.5mm, inner sep=0mm},
edge/.style = {draw={rgb,255: red,160; green,160; blue,160}, ->, line width=0.5mm},
invalidred/.style = {->, line width=0.5mm, draw=red},
invalidblue/.style = {->, line width=0.5mm, draw=blue}
]
	\begin{pgfonlayer}{nodelayer}
		\draw (0, 6) node [vertex, label=left:{$s$}] (0) {};
	
            \draw (0, 4) node [vertex, label={[label distance=-2mm]above right:{(1,0)}}] (1)  {};
		\draw (2, 4) node [vertex, label=right:{(1,1)}] (2)  {};
		\draw (4, 4) node [vertex, label=right:{(1,2)}] (3)  {};
		\draw (6, 4) node [vertex, label=right:{(1,3)}] (4)  {};
            \draw (8, 4) node [vertex, label={[label distance=-2mm]above right:{(1,4)}}]       (5)  {};
  
		\draw (0, 2) node [vertex, label={[label distance=-2mm]above right:{(2,0)}}] (6)  {};
		\draw (2, 2) node [vertex, label=right:{(2,1)}] (7)  {};
		\draw (4, 2) node [vertex, label=right:{(2,2)}] (8)  {};
		\draw (6, 2) node [vertex, label=right:{(2,3)}] (9)  {};
            \draw (8, 2) node [vertex, label={[label distance=-2mm]above right:{(2,4)}}] (10) {};
  
		\draw (8, 0) node [vertex, label={[label distance=-0.5mm]right:{$d$}}] (11) {};
	\end{pgfonlayer}
 
	\begin{pgfonlayer}{edgelayer}
		\draw [invalidred] (0) -- (1);
		\draw [edge] (0) -- (2);
		\draw [edge] (0) -- (3);
		\draw [edge] (0) -- (4);
		\draw [edge] (0) -- (5);
  
		\draw [edge] (1) -- (6);
		\draw [edge] (1) -- (7);
		\draw [edge] (1) -- (8);
		\draw [invalidblue] (1) -- (9);
		\draw [invalidblue] (1) -- (10);

            \draw [edge] (2) -- (7);
		\draw [edge] (2) -- (8);
		\draw [edge] (2) -- (9);
		\draw [invalidblue] (2) -- (10);

            \draw [edge] (3) -- (8);
		\draw [edge] (3) -- (9);
		\draw [edge] (3) -- (10);

            \draw [edge] (4) -- (9);
            \draw [edge] (4) -- (10);

            \draw [edge] (5) -- (10);

            \draw [invalidblue] (6) -- (11);
            \draw [edge] (7) -- (11);
            \draw [edge] (8) -- (11);
            \draw [invalidred] (9) -- (11);
            \draw [invalidred] (10) -- (11);
	\end{pgfonlayer}
\end{tikzpicture}
}
\end{subfigure}%
\begin{subfigure}{0.33\linewidth}
\caption{Dead-end pruning}
\label{subfig:dead end pruning}
\resizebox{\linewidth}{!}{%
\begin{tikzpicture}[
shorten <>/.style = {shorten >=#1mm, shorten <=#1mm},
brc/.style = {decorate, decoration={brace, amplitude=3mm}},
vertex/.style = {fill=black, draw=black, shape=circle, minimum size=2.5mm, inner sep=0mm},
edge/.style = {draw={rgb,255: red,160; green,160; blue,160}, ->, line width=0.5mm},
deadendred/.style = {fill=red, draw=red, shape=circle, minimum size=2.5mm, inner sep=0mm},
deadendblue/.style = {fill=blue, draw=blue, shape=circle, minimum size=2.5mm, inner sep=0mm}
]
	\begin{pgfonlayer}{nodelayer}
		\draw (0, 6) node [vertex, label=left:{$s$}] (0) {};
	
            \draw (0, 4) node [vertex, label={[label distance=-2mm]above right:{(1,0)}}] (1)  {};
		\draw (2, 4) node [vertex, label=right:{(1,1)}] (2)  {};
		\draw (4, 4) node [vertex, label=right:{(1,2)}] (3)  {};
		\draw (6, 4) node [deadendblue, label=right:{(1,3)}] (4)  {};
            \draw (8, 4) node [deadendblue, label={[label distance=-2mm]above right:{(1,4)}}]       (5)  {};
  
		\draw (0, 2) node [deadendred, label={[label distance=-2mm]above right:{(2,0)}}] (6)  {};
		\draw (2, 2) node [vertex, label=right:{(2,1)}] (7)  {};
		\draw (4, 2) node [vertex, label=right:{(2,2)}] (8)  {};
		\draw (6, 2) node [deadendred, label=right:{(2,3)}] (9)  {};
            \draw (8, 2) node [deadendred, label={[label distance=-2mm]above right:{(2,4)}}] (10) {};
  
		\draw (8, 0) node [vertex, label={[label distance=-0.5mm]right:{$d$}}] (11) {};
	\end{pgfonlayer}
 
	\begin{pgfonlayer}{edgelayer}
		\draw [edge] (0) -- (2);
		\draw [edge] (0) -- (3);
		\draw [edge] (0) -- (4);
		\draw [edge] (0) -- (5);
  
		\draw [edge] (1) -- (6);
		\draw [edge] (1) -- (7);
		\draw [edge] (1) -- (8);

            \draw [edge] (2) -- (7);
		\draw [edge] (2) -- (8);
		\draw [edge] (2) -- (9);

            \draw [edge] (3) -- (8);
		\draw [edge] (3) -- (9);
		\draw [edge] (3) -- (10);

            \draw [edge] (4) -- (9);
            \draw [edge] (4) -- (10);

            \draw [edge] (5) -- (10);

            \draw [edge] (7) -- (11);
            \draw [edge] (8) -- (11);
	\end{pgfonlayer}
\end{tikzpicture}
}
\end{subfigure}%
\begin{subfigure}{0.33\linewidth}
\caption{Pruned feasible decision graph}
\label{subfig:final decision graph}
\resizebox{\linewidth}{!}{%
\begin{tikzpicture}[
shorten <>/.style = {shorten >=#1mm, shorten <=#1mm},
brc/.style = {decorate, decoration={brace, amplitude=3mm}},
vertex/.style = {fill=black, draw=black, shape=circle, minimum size=2.5mm, inner sep=0mm},
edge/.style = {draw={rgb,255: red,160; green,160; blue,160}, ->, line width=0.5mm},
invalid/.style = {->, line width=0.5mm, draw=red},
deadend/.style = {fill=red, draw=red, shape=circle, minimum size=2.5mm, inner sep=0mm}
]
	\begin{pgfonlayer}{nodelayer}
		\draw (0, 6) node [vertex, label=left:{$s$}] (0) {};
	
        \draw (0, 4) node [vertex, label={[label distance=-2mm]above right:{(1,0)}}] (1)  {};
		\draw (2, 4) node [vertex, label=right:{(1,1)}] (2)  {};
		\draw (4, 4) node [vertex, label=right:{(1,2)}] (3)  {};
  
		\draw (2, 2) node [vertex, label=right:{(2,1)}] (7)  {};
		\draw (4, 2) node [vertex, label=right:{(2,2)}] (8)  {};
  
		\draw (8, 0) node [vertex, label={[label distance=-0.5mm]right:{$d$}}] (11) {};
	\end{pgfonlayer}
	\begin{pgfonlayer}{edgelayer}
		\draw [edge] (0) -- (2);
		\draw [edge] (0) -- (3);
  
		\draw [edge] (1) -- (7);
		\draw [edge] (1) -- (8);

        \draw [edge] (2) -- (7);
		\draw [edge] (2) -- (8);

        \draw [edge] (3) -- (8);



        \draw [edge] (7) -- (11);
        \draw [edge] (8) -- (11);
	\end{pgfonlayer}
\end{tikzpicture}
}
\end{subfigure}

\caption{Pruning opponent decision graph $G^t_{3,4}$ (\cref{fig:graph_example}) given $\pi^t = \langle 1, 3, 2 \rangle$, $\mathcal{L}_p^t = \langle 0, 1, 0 \rangle$, and $\delta_p = 0$. Opponent allocation bounds are $\underline{\phi}^t = \langle 1, 0, 2 \rangle$ and $\overline{\phi}^t = \langle 4, 2, 3 \rangle$. \subref{subfig:bound pruning} Red and blue edges exceed feasible allocation lower and upper bounds, respectively. \subref{subfig:dead end pruning} Red vertices are dead-ends for $i=1$, while blue vertices are dead-ends for $i=2$ after pruning for $i=1$. \subref{subfig:final decision graph} The final pruned graph of feasible opponent decisions. Thus there are only 3 feasible decisions: $\langle 1, 0, 3 \rangle$, $\langle 1, 1, 2 \rangle$, and $\langle 2, 0, 2 \rangle$.}
\label{fig:pruning example}
\end{figure*}

We can use the decision graph $G_{K,N_p}$ to identify all feasible decisions $\pi\in\Pi_p$ by applying depth-first search to enumerate all paths starting at vertex $s$ and ending at vertex $d$. The weights of the edges observed, in order of traversal, represent the decision associated with each path explored. Note that the complexity of this computation will be proportional to the number of paths through the graph, which is $\mathcal{O}(2^{\min(K-1, N_p)})$ \cite{vuCombinatorialBanditsSequential2019}.

\subsection{Pruning for Feasible Opponent Decisions}

Under bandit or semi-bandit feedback, the player does not receive sufficient information to identify their opponent's true decision. As such, we propose that \CB decision graphs can be used to efficiently compute the set of \textit{feasible} decisions that an opponent may have made based on the information received by the player in a given round. While the problem is still non-polynomial in nature, we may significantly reduce its running time by using this approach.

Given $K$ battlefields, player $p$, round $t$, player decision $\pi^t$, and opponent resources $N_{p'}$, we can compute a decision graph $G^t_{K,N_{p'}}$ representing all decisions by the opponent that would produce the same feedback received by player $p$ in round $t$ (e.g., $L_p^t$ or $\mathcal{L}_p^t$). That is, every path from $s$ to $d$ through this feasible decision graph represents a decision $\phi$ which, when played against $\pi^t$, results in the same feedback observed by the player. This produces the set $\Phi_p^t$ defined in \cref{sec:estimating metrics}. For example, under bandit feedback we may identify that $\Phi_p^t = \{\phi\in\Phi_p\ |\ L_p(\pi^t,\phi) = L_p^t \}$.

The player can compute bounds on the feasible opponent decisions that would produce the observed information using any available information. By computing bounds on the feasible allocations for a given battlefield, edges can be removed from the opponent's feasible decision graph that represent impossible allocations. A visual example of the pruning process for a full round is shown in \cref{fig:pruning example}. Focusing on a specific battlefield, we may follow the example displayed in \cref{subfig:bound pruning}. In this example, a player that loses draws (i.e., $\delta_p=0$) allocated 3 resources to battlefield 2 and won. Thus the opponent must have allocated 2 or fewer resources to that battlefield. Thus, any edges entering a vertex representing battlefield 2 with a weight greater than 2 are invalid and can be pruned (indicated in blue in the middle layer of \cref{subfig:bound pruning}). In practice, the received information and how it can be used depends on the context. To address this, we propose a general technique that only requires bounds on the feasible opponent allocations, regardless of the feedback type. Therefore, the user only needs to implement a way compute the bounds on the opponent's feasible allocations.

Recall that, for any edge $v_{i-1,n'}\to v_{i, n} \in\mathcal{E}$, the opponent allocation $\phi_i$ to battlefield $i$ is $\phi_i=n-n'$. We denote the lower and upper bounds on the feasible values of $\phi_i$ by $\underline{\phi}_i$ and $\overline{\phi}_i$, respectively. That is, the feasible allocations by a player to battlefield $i$ are bounded such that $\phi_i\in[\underline{\phi}_i \twodots \overline{\phi}_i]$. Thus any edge $v_{i-1,n'}\to v_{i, n}$ is invalid if $\phi_i = n-n'<\underline{\phi}_i$ or $\phi_i = n-n'>\overline{\phi}_i$. This is shown in \cref{subfig:bound pruning}. 

Pruning edges using these bounds may create dead-ends (i.e., vertices with outdegree 0). Trivially, no paths to vertex $d$ exist which pass through a dead-end vertex. Therefore, we can prune all dead-end vertices (except for $d$) and any edges directed at them from the feasible decision graph. This may create new dead-end vertices. Therefore, pruning can be performed iteratively from $i=K-1$ to $i=0$, eliminating all dead-end vertices. This process is displayed in \cref{subfig:dead end pruning}. Afterwards, $d$ can be reached from any remaining vertex in the graph, as seen in \cref{subfig:final decision graph}. Thus all attempted paths starting at $s$ represent a valid decision. \cref{alg:dead end pruning} presents an efficient procedure for performing dead-end pruning.\footnoteref{foot:supplementary material} Although not necessary, we can prune vertices with indegree 0 (excluding $s$) to make every vertex reachable from $s$. However, this does not have any meaningful impact beyond potentially reducing the necessary memory space required to represent and store the graph.

\section{Bounding Opponent \CB Allocations}\label{sec:estimating opponent allocations}

To enhance the precision and computational efficiency of our payoff estimates for the \CB game, we consider the example of semi-bandit feedback to establish bounds on an opponent's feasible decisions. The computation of these limits is conducted on a per-round basis and is not dependent on previous rounds. For this reason, in this section we omit the round index $t$ from the notation introduced in \cref{sec:problem formulation}. Proofs for all proposed theorems and lemmas are presented in \cref{supp:proofs}.\footnoteref{foot:supplementary material} 

For a given round, the player receives a vector of scalar payoffs for each battlefield (semi-bandit feedback). The player aims to estimate the feasible bounds $\underline{\phi}_i$ and $\overline{\phi}_i$ on their opponent's true decision $\phi$. We denote any battlefield $i\in I$ that player $p$ loses to be in the set $\Lambda$, and any battlefield they win the be in the set $\Omega$. As described in \cref{sec:modeling with graphs}, the true opponent allocation $\phi_i$ must be bounded such that $\phi_i\in[\underline{\phi}_i \twodots \overline{\phi}_i]$.

Note that there are several valid values for $\underline{\phi}_i$ and $\overline{\phi}_i$. Therefore, our goal is to minimize the size of the enclosed range by identifying the tightest feasible bounds. This is achieved by maximizing $\underline{\phi}_i$ and minimizing $\overline{\phi}_i$. \cref{tab:bound formulas} summarizes the tightest generally applicable bounds derived in this section and their associated conditions.

\begin{table}[htb!p]
\centering
\begin{tabular}{Sc  Sc Sc}
\toprule
Bound & Value     & Condition \\ \midrule
\multirow{2}{*}{$\overline{\phi}_i$} & $\pi_i + \delta_p - 1$ & $i\in\Omega$ \\ \cdashline{2-3}
    & $N_{p'} - \sum_{\lambda\in\Lambda\setminus\{i\}}[\pi_\lambda+\delta_p]$ & $i\in\Lambda$ \\ \midrule
\multirow{2}{*}{$\underline{\phi}_i$} & $0$ & $i\in\Omega$ \\ \cdashline{2-3}
    & $\pi_i + \delta_p$ & $i\in\Lambda$ \\ 
\bottomrule
\end{tabular}
\caption{Bounds on feasible opponent allocation $\phi_i$ given $i\in I$ implemented in experimentation.}
\label{tab:bound formulas}
\end{table}

Using our earlier definitions, the following propositions follow trivially:
\begin{restatable}{prop}{propexpandedn}\label{prop:expanded N}
    $N_p = \sum_{\lambda\in\Lambda}\pi_\lambda + \sum_{\omega\in\Omega}\pi_\omega$ for any player~$p$.
\end{restatable}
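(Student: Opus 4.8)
This proposition is an immediate consequence of the definitions introduced in \cref{sec:problem formulation,sec:estimating opponent allocations}, so the plan is essentially to unpack the relevant definitions and observe that the claimed identity is just the resource-conservation constraint rewritten by partitioning the battlefield index set.

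The key steps are as follows. First, recall from \cref{sec:problem formulation} that every decision $\pi$ played by player $p$ satisfies the fixed-budget constraint $\sum_{i=1}^{K}\pi_i = N_p$, i.e., $\sum_{i\in I}\pi_i = N_p$ where $I = [1\twodots K]$. Second, recall from \cref{sec:estimating opponent allocations} that $\Lambda$ is defined as the set of battlefields player $p$ loses and $\Omega$ as the set of battlefields player $p$ wins. Since for each battlefield $i$ player $p$ either wins it or loses it (the payoff function $L_p(\pi,\phi) = \sum_{i\in I}[\pi_i + \delta_p > \phi_i]$ assigns each battlefield a definite win/loss outcome via the tie-breaking variable $\delta_p$), the sets $\Lambda$ and $\Omega$ are disjoint and their union is exactly $I$. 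Third, split the summation over $I$ along this partition: $\sum_{i\in I}\pi_i = \sum_{\lambda\in\Lambda}\pi_\lambda + \sum_{\omega\in\Omega}\pi_\omega$. Combining this with the budget constraint yields $N_p = \sum_{\lambda\in\Lambda}\pi_\lambda + \sum_{\omega\in\Omega}\pi_\omega$, as claimed.

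There is no real obstacle here; the only point requiring a (trivial) justification is that $\{\Lambda,\Omega\}$ genuinely partitions $I$, which follows because the predicate $[\pi_i + \delta_p > \phi_i]$ is either true or false for each $i$, so no battlefield is both won and lost, and none is neither. Everything else is a one-line manipulation of a finite sum.
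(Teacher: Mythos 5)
Your proof is correct and matches the paper's treatment: the paper declares this proposition to follow trivially from the budget constraint $\sum_{i\in I}\pi_i = N_p$ together with the fact that $\Omega$ and $\Lambda$ partition $I$ (stated separately as a proposition in the supplement), which is exactly the decomposition you spell out. Nothing is missing.
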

\begin{restatable}{prop}{proplosebaselowerbound}\label{prop:lose base lower bound}
    $\underline{\phi}_\lambda = \pi_\lambda+\delta_p$ is a valid lower bound for any battlefield $\lambda\in\Lambda$.
\end{restatable}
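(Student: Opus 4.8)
The plan is to read the bound straight off the win/loss condition built into the payoff function $L_p$. Recall that, by definition of $L_p(\pi,\phi) = \sum_{i\in I}[\pi_i + \delta_p > \phi_i]$, player $p$ wins battlefield $i$ precisely when $\pi_i + \delta_p > \phi_i$, and loses it otherwise. Since $\lambda\in\Lambda$ is, by assumption, a battlefield that $p$ loses (equivalently, $\ell^{\lambda}_p = 0$), the win condition must fail at $\lambda$, i.e. $\pi_\lambda + \delta_p \le \phi_\lambda$. Rearranging gives $\phi_\lambda \ge \pi_\lambda + \delta_p$, which is exactly the assertion that $\underline{\phi}_\lambda = \pi_\lambda + \delta_p$ is a valid lower bound on the opponent's (unobserved) allocation to $\lambda$.

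The one point worth stating carefully is what ``valid lower bound'' means: it must hold for every opponent decision $\phi\in\Phi_p$ that is consistent with the semi-bandit feedback $\mathcal{L}_p$ seen by $p$ — in particular, consistent with $p$ losing $\lambda$. The inequality derived above uses nothing about $\phi$ beyond $\lambda\in\Lambda$, so it holds uniformly across all such $\phi$, and the bound is valid in the required sense. Note that no appeal to integrality ($\pi_\lambda,\phi_\lambda\in\mathbb{N}_0$) or to the budget constraint $\sum_i \phi_i = N_{p'}$ is needed for this bound; those ingredients are what make the complementary $\overline{\phi}_\lambda$ bound and the $\Omega$-side bounds in \cref{tab:bound formulas} nontrivial.

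I do not expect a genuine obstacle here: the statement is essentially the contrapositive of the win condition, so the proof is a single line once the correspondence $i\in\Lambda \iff \ell^{i}_p = 0 \iff \pi_i+\delta_p \le \phi_i$ (and its mirror $i\in\Omega \iff \pi_i+\delta_p > \phi_i$) is made explicit. The only mild subtlety is purely definitional — ensuring that the ``feasible'' opponent decisions are exactly those inducing the observed per-battlefield payoff vector — and the same reasoning, read in the other direction on $\Omega$, immediately yields the companion facts $\underline{\phi}_\omega = 0$ and $\overline{\phi}_\omega = \pi_\omega + \delta_p - 1$ used later.
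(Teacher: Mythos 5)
Your proof is correct and matches the paper's intent: the paper states this proposition "follows trivially" from the win/loss condition $\pi_i+\delta_p>\phi_i$ and gives no further argument, and your derivation (losing $\lambda$ means $\pi_\lambda+\delta_p\le\phi_\lambda$, hence the bound) is exactly that trivial argument made explicit. No gaps.
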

\begin{restatable}{prop}{propwinbaselowerbound}\label{prop:win base lower bound}
    $\underline{\phi}_\omega = 0$ is a valid lower bound for any battlefield $\omega\in\Omega$.
\end{restatable}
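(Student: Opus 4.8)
The three propositions are essentially immediate consequences of the definitions, so my plan is simply to unwind each definition carefully.

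For \cref{prop:expanded N}, the plan is to start from the defining constraint of the \CB game, $\sum_{i=1}^K \pi_i = N_p$, and observe that the index set $I = [1\twodots K]$ is partitioned by the round's outcome into the set of lost battlefields $\Lambda$ and the set of won battlefields $\Omega$ (these are disjoint and their union is $I$, since every battlefield is either won or lost). Splitting the sum over $I$ according to this partition gives $N_p = \sum_{i\in I}\pi_i = \sum_{\lambda\in\Lambda}\pi_\lambda + \sum_{\omega\in\Omega}\pi_\omega$, which is the claim. The only thing worth noting is that this holds for \emph{any} player $p$ (with $\Lambda$ and $\Omega$ interpreted from that player's perspective), but the argument is identical.

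For \cref{prop:lose base lower bound}, I would argue by contrapositive on the battlefield win condition. Recall from the formulation that player $p$ wins battlefield $i$ iff $\pi_i + \delta_p > \phi_i$, and otherwise the opponent wins. So if $\lambda\in\Lambda$, i.e., $p$ lost battlefield $\lambda$, then the win condition fails: $\pi_\lambda + \delta_p \le \phi_\lambda$, equivalently $\phi_\lambda \ge \pi_\lambda + \delta_p$. Hence every feasible opponent allocation to $\lambda$ is at least $\pi_\lambda+\delta_p$, so $\underline{\phi}_\lambda = \pi_\lambda+\delta_p$ is a valid lower bound. For \cref{prop:win base lower bound}, the bound $\underline{\phi}_\omega = 0$ is valid simply because all allocations are discrete and nonnegative ($\phi_i \in \mathbb{N}_0$ by the formulation), so $\phi_\omega \ge 0$ trivially regardless of the outcome; no appeal to $\omega\in\Omega$ is even needed for validity, though stating it for $\omega\in\Omega$ is harmless and matches the table.

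There is no real obstacle here — all three follow directly from the game's definitions — so the "hard part" is merely bookkeeping: being explicit that $\{\Lambda,\Omega\}$ partitions $I$, that "lose battlefield $i$" is exactly the negation of the strict inequality $\pi_i+\delta_p>\phi_i$, and that discreteness/nonnegativity of allocations is what underlies the trivial lower bound of $0$. I would present each proposition's proof in two or three sentences and move on.
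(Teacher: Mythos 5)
Your proposal is correct and matches the paper's treatment: the paper states these propositions ``follow trivially'' from the definitions and gives no further argument, and your justification---that $\phi_\omega \ge 0$ holds because the formulation requires $\phi_i \in \mathbb{N}_0$ for all $i$, independent of whether the battlefield was won---is exactly the intended one-line argument.
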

\begin{restatable}{prop}{proplosebaseupperbound}\label{prop:lose base upper bound}
    $\overline{\phi}_\lambda = N_{p'}$ is a valid upper bound for any battlefield $\lambda\in\Lambda$.
\end{restatable}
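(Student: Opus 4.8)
The plan is to use nothing beyond the two structural constraints that the \CB formulation imposes on any decision: non-negativity of allocations and the fixed resource budget. First I would unwind what ``valid upper bound'' means: $\overline{\phi}_\lambda$ is valid if $\phi_\lambda \le \overline{\phi}_\lambda$ for every opponent decision $\phi\in\Phi_p$ that is consistent with the feedback observed for the round (hence, a fortiori, for every $\phi$ in the pruned set $\Phi_p^t$). So it suffices to bound $\phi_\lambda$ uniformly over all feasible opponent decisions.

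Next I would recall from \cref{sec:problem formulation} that the opponent $p'$ plays a decision $\phi = \langle \phi_1,\ldots,\phi_K\rangle$ with $\phi_i \in \mathbb{N}_0$ for every $i\in I$ and $\sum_{i=1}^{K}\phi_i = N_{p'}$. Fixing the battlefield $\lambda\in\Lambda$ and discarding the remaining (non-negative) summands, I get
\begin{equation}
    \phi_\lambda \;\le\; \sum_{i=1}^{K}\phi_i \;=\; N_{p'}.
\end{equation}
Since this holds for every feasible $\phi$, setting $\overline{\phi}_\lambda \coloneqq N_{p'}$ gives a valid upper bound, which is the claim. I would remark in passing that the argument never uses semi-bandit feedback, nor the fact that $\lambda$ is a losing battlefield; the same inequality holds for any $i\in I$. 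The restriction to $\Lambda$ in the statement simply reflects that a strictly tighter bound ($\pi_i + \delta_p - 1$) is available when $i\in\Omega$, as recorded in \cref{tab:bound formulas}.

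There is essentially no obstacle here: the proposition is an immediate consequence of the constant-sum resource constraint together with discreteness and non-negativity of allocations, and it serves mainly as the baseline upper bound for $\lambda\in\Lambda$ that the subsequent analysis refines into the sharper value $N_{p'} - \sum_{\lambda'\in\Lambda\setminus\{\lambda\}}[\pi_{\lambda'}+\delta_p]$.
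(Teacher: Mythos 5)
Your proposal is correct and matches the paper's treatment: the paper states this proposition "follows trivially" from the definitions and gives no explicit proof, and your argument — non-negativity of the $\phi_i$ plus the budget constraint $\sum_{i}\phi_i = N_{p'}$ — is exactly the implicit justification. Your side remark that the bound holds for all $i\in I$ and that the restriction to $\Lambda$ only reflects the availability of the tighter bound $\pi_i+\delta_p-1$ on $\Omega$ is also consistent with the paper.
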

\begin{restatable}{prop}{propwinbaseupperbound}\label{prop:win base upper bound}
    $\overline{\phi}_\omega = \pi_\omega+\delta_p-1$ is a valid upper bound for any battlefield $\omega\in\Omega$.
\end{restatable}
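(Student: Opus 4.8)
The plan is to unpack the definition of the set $\Omega$ directly from the \CB payoff function and then exploit integrality of the allocations. First I would recall that, by construction, $\omega\in\Omega$ means precisely that player $p$ wins battlefield $\omega$, which under the payoff function $L_p(\pi,\phi) = \sum_{i\in I}[\pi_i + \delta_p > \phi_i]$ is equivalent to the strict inequality $\pi_\omega + \delta_p > \phi_\omega$. Next I would invoke the discreteness assumptions of the \CB formulation, namely $\pi_\omega, \phi_\omega \in \mathbb{N}_0$ and $\delta_p \in \{0,1\}$, so that $\pi_\omega + \delta_p$ and $\phi_\omega$ are both integers. For integers, $\pi_\omega + \delta_p > \phi_\omega$ is equivalent to $\phi_\omega \le \pi_\omega + \delta_p - 1$, which is exactly the claim that $\overline{\phi}_\omega = \pi_\omega + \delta_p - 1$ is a valid upper bound on the opponent's allocation to $\omega$.

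As a sanity check I would verify that the proposed bound respects $\phi_\omega \in \mathbb{N}_0$: whenever $\omega\in\Omega$ we must have $\pi_\omega + \delta_p \ge 1$, since otherwise the winning condition $\pi_\omega + \delta_p > \phi_\omega \ge 0$ cannot hold; hence $\overline{\phi}_\omega \ge 0$. Combined with the lower bound $\underline{\phi}_\omega = 0$ from \cref{prop:win base lower bound}, this yields the non-empty feasible range $\phi_\omega \in [0 \twodots \pi_\omega + \delta_p - 1]$. I would also note in passing that this bound is never weaker than the trivial upper bound $N_{p'}$, since a winning allocation satisfies $\pi_\omega \le N_p$ and $\delta_p \le 1$.

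There is no substantive obstacle here: the result follows in essentially one line from the definition of winning a battlefield. The only point that warrants care is making the integrality step explicit --- it is the strictness of $\pi_\omega + \delta_p > \phi_\omega$ together with the fact that all allocations and $\delta_p$ take integer values that licenses subtracting one while keeping a valid (and in fact tight) bound; over the reals one could only conclude $\phi_\omega < \pi_\omega + \delta_p$, which is not of the desired closed-interval form.
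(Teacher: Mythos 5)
Your core argument is correct and is exactly the one-line integrality argument the paper has in mind: the paper states this proposition without proof (``the following propositions follow trivially''), and your derivation---$\omega\in\Omega$ means $\pi_\omega+\delta_p>\phi_\omega$, and since $\pi_\omega,\phi_\omega\in\mathbb{N}_0$ and $\delta_p\in\{0,1\}$ this strict inequality between integers gives $\phi_\omega\le\pi_\omega+\delta_p-1$---is the intended justification, with your sanity check matching the paper's Lemma on $\pi_\omega+\delta_p-1\ge 0$. One immaterial slip: your parenthetical claim that this bound is never weaker than $N_{p'}$ does not follow from $\pi_\omega\le N_p$ when $N_p>N_{p'}$ (e.g., a large winning allocation $\pi_\omega$ can make $\pi_\omega+\delta_p-1>N_{p'}$), but this aside plays no role in the validity of the bound.
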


These lead to the following \lcnamecref{lem:general upper bound} regarding the upper bound on any opponent allocation $\phi_i$:
\begin{restatable}{lemma}{lemgeneralupperbound}\label{lem:general upper bound}
    $\overline{\phi}_i = N_{p'} - \sum_{\lambda\in\Lambda\setminus\{i\}}[\pi_\lambda+\delta_p]$ is a valid upper bound for any battlefield $i\in I$.
\end{restatable}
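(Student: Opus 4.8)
The plan is to bound $\phi_i$ above by arguing from the opponent's total resource constraint $\sum_{j\in I}\phi_j = N_{p'}$ together with the per-battlefield lower bounds already established in \cref{prop:lose base lower bound,prop:win base lower bound}. Since all allocations are nonnegative integers, for any fixed battlefield $i$ we have $\phi_i = N_{p'} - \sum_{j\in I\setminus\{i\}}\phi_j \le N_{p'} - \sum_{j\in I\setminus\{i\}}\underline{\phi}_j$, so it suffices to lower-bound the sum of opponent allocations over all battlefields other than $i$.

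The key steps, in order, are: (1) write $I\setminus\{i\} = (\Lambda\setminus\{i\}) \cup (\Omega\setminus\{i\})$, a disjoint union since $\Lambda$ and $\Omega$ partition $I$; (2) on each losing battlefield $\lambda\in\Lambda\setminus\{i\}$ apply \cref{prop:lose base lower bound} to get $\phi_\lambda \ge \pi_\lambda + \delta_p$; (3) on each winning battlefield $\omega\in\Omega\setminus\{i\}$ apply \cref{prop:win base lower bound} to get the (trivial) bound $\phi_\omega \ge 0$; (4) sum these bounds to obtain $\sum_{j\in I\setminus\{i\}}\phi_j \ge \sum_{\lambda\in\Lambda\setminus\{i\}}[\pi_\lambda+\delta_p]$; (5) substitute into the rearranged resource constraint to conclude $\phi_i \le N_{p'} - \sum_{\lambda\in\Lambda\setminus\{i\}}[\pi_\lambda+\delta_p]$, which is exactly the claimed bound. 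One should also note that this is a genuine (i.e., valid, possibly non-integer-attaining but still correct) upper bound for \emph{every} $i\in I$, whether $i\in\Lambda$ or $i\in\Omega$; the statement does not require it to be the tightest in the winning case (where \cref{prop:win base upper bound} may do better), only that it always holds.

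There is essentially no hard step here — the argument is a one-line consequence of nonnegativity plus the resource constraint plus \cref{prop:lose base lower bound}. The only point requiring a little care is bookkeeping around the excluded index $i$: when $i\in\Lambda$, the set $\Lambda\setminus\{i\}$ genuinely drops a term, whereas when $i\in\Omega$ we have $\Lambda\setminus\{i\}=\Lambda$; the formula is written uniformly so that both cases are covered, and one should check that the resource constraint $\sum_{j\in I}\phi_j=N_{p'}$ is indeed available (it is the opponent-side analogue of $\sum_{i}\pi_i = N_p$ from the game formulation, equivalently \cref{prop:expanded N} applied to $p'$). I would present the chain of (in)equalities in a single short display and then remark on why it holds for all $i\in I$, mirroring how \cref{prop:lose base upper bound} (the $\overline{\phi}_\lambda = N_{p'}$ bound) is the special case obtained by discarding the $\sum_{\lambda\in\Lambda\setminus\{i\}}$ term.
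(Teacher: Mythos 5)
Your proposal is correct and follows essentially the same route as the paper: rearrange the opponent's resource constraint to isolate $\phi_i$, split the remaining battlefields into $\Lambda\setminus\{i\}$ and $\Omega\setminus\{i\}$, and substitute the lower bounds from \cref{prop:lose base lower bound,prop:win base lower bound} to obtain the claimed upper bound. The paper additionally appends a proof by contradiction as a sanity check, but its main argument is the same substitution you describe.
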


We want to determine when this bound is tighter than our existing bounds from \cref{prop:lose base upper bound,prop:win base upper bound}. This being the case, we want to identify on what, if any, conditions it is tighter. To do so, we propose the following \lcnamecrefs{lem:alt lose upper bound}, which are used later in \cref{thm:tight upper bound}:

\begin{restatable}{lemma}{lemaltloseupperbound}\label{lem:alt lose upper bound}
    Given $i\in\Lambda$, then $\overline{\phi}_i = N_{p'} - \sum_{\lambda\in\Lambda\setminus\{i\}}[\pi_\lambda+\delta_p]$ (\cref{lem:general upper bound}) is always an equivalent or tighter valid upper bound compared to $\overline{\phi}_i = N_{p'}$ (\cref{prop:lose base upper bound}).
\end{restatable}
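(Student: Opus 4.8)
The plan is to separate the claim into its two halves: \emph{validity} of the proposed bound, and the comparison showing it is no larger than $N_{p'}$. Validity is immediate, since \cref{lem:general upper bound} already asserts that $\overline{\phi}_i = N_{p'} - \sum_{\lambda\in\Lambda\setminus\{i\}}[\pi_\lambda+\delta_p]$ is a valid upper bound for \emph{every} battlefield $i\in I$, and in particular for $i\in\Lambda$. So nothing new needs to be argued there; the work reduces to a numerical comparison of the two candidate values.

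For the comparison, I would observe that
\begin{equation}
    N_{p'} - \sum_{\lambda\in\Lambda\setminus\{i\}}[\pi_\lambda+\delta_p] \le N_{p'}
    \iff \sum_{\lambda\in\Lambda\setminus\{i\}}[\pi_\lambda+\delta_p] \ge 0 .
\end{equation}
The right-hand inequality then follows term by term: by the formulation in \cref{sec:problem formulation}, all allocations are nonnegative integers ($\pi_\lambda\in\mathbb{N}_0$) and $\delta_p\in\{0,1\}$, so each summand $\pi_\lambda+\delta_p\ge 0$, and hence so is the sum (an empty sum being $0$). Therefore the bound from \cref{lem:general upper bound} is always $\le N_{p'}$, i.e.\ equivalent to or tighter than the bound from \cref{prop:lose base upper bound}. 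If desired, I would add a one-line remark characterizing equality: it holds exactly when $\sum_{\lambda\in\Lambda\setminus\{i\}}[\pi_\lambda+\delta_p]=0$, which occurs precisely when $i$ is the only losing battlefield, or when $\delta_p=0$ and $\pi_\lambda=0$ for every other losing battlefield $\lambda$.

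There is no real obstacle here — the statement is essentially a monotonicity observation. The only point requiring care is to explicitly invoke the nonnegativity and integrality assumptions on allocations (and the range of $\delta_p$) when asserting that each term of the subtracted sum is nonnegative, rather than treating it as self-evident; and to lean on \cref{lem:general upper bound} for validity so that the proof genuinely only needs the size comparison.
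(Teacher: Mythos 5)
Your proposal is correct and follows essentially the same route as the paper's own proof: reduce the comparison to $0 \le \sum_{\lambda\in\Lambda\setminus\{i\}}[\pi_\lambda+\delta_p]$ and conclude from $\pi_\lambda\ge 0$ and $\delta_p\in\{0,1\}$. Your explicit appeal to \cref{lem:general upper bound} for validity and the equality characterization are minor, harmless additions.
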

\begin{restatable}{lemma}{lemaltwinupperbound}\label{lem:alt win upper bound}
    Given $i\in\Omega$, then $\overline{\phi}_i = N_{p'} - \sum_{\lambda\in\Lambda}[\pi_\lambda+\delta_p]$ (\cref{lem:general upper bound}) is a tighter valid upper bound compared to $\overline{\phi}_i = \pi_i+\delta_p-1$ (\cref{prop:win base upper bound}) if and only if $N_{p'} + 1  < \sum_{\lambda\in\Lambda\cup\{i\}}[\pi_\lambda+\delta_p]$. 
\end{restatable}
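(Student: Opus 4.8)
The plan is to compare the two candidate upper bounds for $i\in\Omega$ directly and extract the inequality under which the general bound wins. Write $B_1 \coloneqq \pi_i+\delta_p-1$ for the bound from \cref{prop:win base upper bound}, and write $B_2 \coloneqq N_{p'} - \sum_{\lambda\in\Lambda}[\pi_\lambda+\delta_p]$ for the bound obtained from \cref{lem:general upper bound} specialized to $i\in\Omega$ (note that since $i\in\Omega$ we have $\Lambda\setminus\{i\}=\Lambda$, so the sum runs over all of $\Lambda$). Both are valid upper bounds on $\phi_i$ by the cited results, so the only question is strictness: $B_2$ is a \emph{tighter} bound than $B_1$ precisely when $B_2 < B_1$.

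First I would simply rearrange $B_2 < B_1$. Substituting the definitions gives
\begin{equation}
    N_{p'} - \sum_{\lambda\in\Lambda}[\pi_\lambda+\delta_p] < \pi_i+\delta_p-1 .
\end{equation}
Moving the sum to the right-hand side and the term $\pi_i+\delta_p-1$ partially to the left, this is equivalent to
\begin{equation}
    N_{p'} + 1 < \sum_{\lambda\in\Lambda}[\pi_\lambda+\delta_p] + (\pi_i+\delta_p) = \sum_{\lambda\in\Lambda\cup\{i\}}[\pi_\lambda+\delta_p],
\end{equation}
where the last equality uses $i\notin\Lambda$ (since $i\in\Omega$ and $\Lambda\cap\Omega=\varnothing$), so that adjoining $i$ to the index set simply appends the single term $\pi_i+\delta_p$. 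This is exactly the stated condition, and since every step is an equivalence, the ``if and only if'' follows immediately.

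The one point that needs a little care — and the closest thing to an obstacle — is the bookkeeping on the index set of the sum: \cref{lem:general upper bound} is stated with $\sum_{\lambda\in\Lambda\setminus\{i\}}$, and one must observe that for $i\in\Omega$ this is the same as $\sum_{\lambda\in\Lambda}$ because $i$ is already absent from $\Lambda$. After that, the argument is a one-line algebraic manipulation; no case analysis on $\delta_p$ or on whether the bounds are nonnegative is needed, since validity of both bounds is already granted by the referenced propositions and lemma. I would also remark (optionally) that the displayed condition can be read as saying the opponent's total budget $N_{p'}$ is strictly smaller than the resources player $p$ ``spent losing'' plus what $p$ spent winning battlefield $i$ — a situation that can only arise when $N_{p'}$ is comparatively small — which is consistent with the paper's earlier observation that tighter opponent bounds are harder to come by when the opponent is resource-rich.
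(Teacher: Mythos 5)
Your proposal is correct and matches the paper's own proof essentially line for line: both observe that $i\in\Omega$ implies $\Lambda\setminus\{i\}=\Lambda$, then rearrange the strict inequality $N_{p'}-\sum_{\lambda\in\Lambda}[\pi_\lambda+\delta_p]<\pi_i+\delta_p-1$ into the stated condition by absorbing $\pi_i+\delta_p$ into the index set. Nothing further is needed.
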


These \lcnamecrefs{lem:alt lose upper bound} can be combined to yield a tight upper bound  $\overline{\phi}_i$:

\begin{restatable}{theorem}{thmtightupperbound}\label{thm:tight upper bound}
    $\overline{\phi}_i = N_{p'} - \sum_{\lambda\in\Lambda\setminus\{i\}}[\pi_\lambda+\delta_p]$ is a tighter upper bound compared to the bounds in \cref{prop:lose base upper bound,prop:win base upper bound} given a battlefield $i\in I$ if and only if $i\in\Lambda$ or $N_{p'} + 1  < \sum_{\lambda\in\Lambda\cup\{i\}}[\pi_\lambda+\delta_p]$.
\end{restatable}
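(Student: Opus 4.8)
The plan is to reduce \cref{thm:tight upper bound} to the two lemmas that precede it by a straightforward case split on whether $i\in\Lambda$ or $i\in\Omega$, together with an observation reconciling the slightly different sums appearing in \cref{lem:alt lose upper bound} and \cref{lem:alt win upper bound}. The statement to be proved is that $\overline{\phi}_i = N_{p'} - \sum_{\lambda\in\Lambda\setminus\{i\}}[\pi_\lambda+\delta_p]$ beats the relevant base bound (\cref{prop:lose base upper bound} when $i\in\Lambda$, \cref{prop:win base upper bound} when $i\in\Omega$) exactly when $i\in\Lambda$ or $N_{p'} + 1 < \sum_{\lambda\in\Lambda\cup\{i\}}[\pi_\lambda+\delta_p]$.

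First I would handle the case $i\in\Lambda$. Here the relevant base bound is the one from \cref{prop:lose base upper bound}, namely $\overline{\phi}_i = N_{p'}$, and \cref{lem:alt lose upper bound} already tells us the general bound is always equivalent or tighter in this case. So the ``if'' direction is immediate whenever $i\in\Lambda$ (the first disjunct of the condition holds), and for the ``only if'' direction there is nothing to check on this branch since the claimed condition is automatically satisfied. One subtlety to address carefully: the theorem says ``tighter,'' and \cref{lem:alt lose upper bound} only guarantees ``equivalent or tighter.'' I would either note that the statement should be read as ``equivalent or tighter'' consistently with the lemma, or observe that when $i\in\Lambda$ the two bounds coincide precisely when $\sum_{\lambda\in\Lambda\setminus\{i\}}[\pi_\lambda+\delta_p]=0$, i.e. $\Lambda=\{i\}$ and ($\delta_p=0$ or... ), and fold that into the phrasing; I expect this bookkeeping about strictness versus non-strictness to be the main point requiring care.

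Next I would handle the case $i\in\Omega$. Now the relevant base bound is $\overline{\phi}_i = \pi_i+\delta_p-1$ from \cref{prop:win base upper bound}. Note that when $i\in\Omega$ we have $\Lambda\setminus\{i\}=\Lambda$, so the general bound from \cref{lem:general upper bound} is exactly $N_{p'} - \sum_{\lambda\in\Lambda}[\pi_\lambda+\delta_p]$, which is the bound analyzed in \cref{lem:alt win upper bound}. That lemma states it is strictly tighter iff $N_{p'} + 1 < \sum_{\lambda\in\Lambda\cup\{i\}}[\pi_\lambda+\delta_p]$ — which, since $i\notin\Lambda$, is the same as $\sum_{\lambda\in\Lambda}[\pi_\lambda+\delta_p] > N_{p'}-\delta_p+1-\pi_i$, matching the theorem's stated condition. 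Since in this case the first disjunct $i\in\Lambda$ is false, the theorem's ``if and only if'' collapses exactly to the condition in \cref{lem:alt win upper bound}, so this branch is just a restatement of that lemma. Finally I would combine the two branches: the disjunction ``$i\in\Lambda$ or $N_{p'}+1<\sum_{\lambda\in\Lambda\cup\{i\}}[\pi_\lambda+\delta_p]$'' is true in the first case (always) and equivalent to the tightness condition in the second case, which is precisely the claimed biconditional. The only real obstacle is making the $\Lambda\setminus\{i\}$ versus $\Lambda\cup\{i\}$ index juggling and the strict/non-strict inequality conventions line up cleanly across the two cases.
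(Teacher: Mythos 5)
Your proposal matches the paper's proof exactly: the paper also obtains \cref{thm:tight upper bound} by combining \cref{lem:alt lose upper bound} (the $i\in\Lambda$ case) with \cref{lem:alt win upper bound} (the $i\in\Omega$ case). Your observation about the mismatch between ``tighter'' in the theorem and ``equivalent or tighter'' in \cref{lem:alt lose upper bound} is a legitimate bookkeeping point that the paper's one-line proof does not address, but it does not change the argument.
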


Using a similar approach to \cref{lem:general upper bound,thm:tight upper bound}, we identify the following \lcnamecref{lem:general lower bound} regarding the lower bound on any opponent allocation $\phi_i$:

\begin{restatable}{lemma}{thmgenerallowerbound}\label{lem:general lower bound}
    $\underline{\phi}_i = \pi_i\cdot\mathds{1}_\Lambda(i) + (|\Lambda|-1-\mathds{1}_\Lambda(i))\times \left(\sum_{\lambda\in\Lambda}[\pi_\lambda+\delta_p]- N_{p'}\right) - \sum_{\omega\in\Omega\setminus\{i\}}[\pi_\omega+\delta_p-1]$ is a valid lower bound given any battlefield $i\in I$.
\end{restatable}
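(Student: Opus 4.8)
The plan is to bound $\phi_i$ from below using only the constant-sum constraint $\sum_{j\in I}\phi_j = N_{p'}$ together with per-battlefield \emph{upper} bounds on the remaining allocations. Writing $\phi_i = N_{p'} - \sum_{\lambda\in\Lambda\setminus\{i\}}\phi_\lambda - \sum_{\omega\in\Omega\setminus\{i\}}\phi_\omega$, any upper bounds on the two sums on the right immediately yield a lower bound on $\phi_i$.

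For each won battlefield $\omega\in\Omega\setminus\{i\}$ I would invoke the tight bound $\phi_\omega \le \pi_\omega+\delta_p-1$ of \cref{prop:win base upper bound}. For each lost battlefield $\lambda\in\Lambda\setminus\{i\}$ I would invoke \cref{lem:general upper bound} to get $\phi_\lambda \le N_{p'} - \sum_{\lambda'\in\Lambda\setminus\{\lambda\}}[\pi_{\lambda'}+\delta_p]$, and rewrite the right-hand side as $N_{p'} - \sum_{\lambda'\in\Lambda}[\pi_{\lambda'}+\delta_p] + [\pi_\lambda+\delta_p]$ so that the only $\lambda$-dependent piece is isolated. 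Summing this inequality over the $|\Lambda|-\mathds{1}_\Lambda(i)$ battlefields in $\Lambda\setminus\{i\}$, and using $\sum_{\lambda\in\Lambda\setminus\{i\}}[\pi_\lambda+\delta_p] = \sum_{\lambda\in\Lambda}[\pi_\lambda+\delta_p] - \mathds{1}_\Lambda(i)[\pi_i+\delta_p]$ to account for whether $i$ itself lies in $\Lambda$, yields a single clean upper bound on $\sum_{\lambda\in\Lambda\setminus\{i\}}\phi_\lambda$.

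Substituting both bounds into the identity for $\phi_i$ and collecting terms, the leading $N_{p'}$, the $(|\Lambda|-\mathds{1}_\Lambda(i))\bigl(N_{p'}-\sum_{\lambda\in\Lambda}[\pi_\lambda+\delta_p]\bigr)$ contribution, and the stray $-\sum_{\lambda\in\Lambda}[\pi_\lambda+\delta_p]$ combine so that the coefficient of $\bigl(\sum_{\lambda\in\Lambda}[\pi_\lambda+\delta_p]-N_{p'}\bigr)$ collapses to exactly $|\Lambda|-1-\mathds{1}_\Lambda(i)$, leaving $\phi_i \ge \mathds{1}_\Lambda(i)[\pi_i+\delta_p] + (|\Lambda|-1-\mathds{1}_\Lambda(i))\bigl(\sum_{\lambda\in\Lambda}[\pi_\lambda+\delta_p]-N_{p'}\bigr) - \sum_{\omega\in\Omega\setminus\{i\}}[\pi_\omega+\delta_p-1]$. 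Since $\delta_p\ge 0$, we have $\mathds{1}_\Lambda(i)[\pi_i+\delta_p] \ge \pi_i\cdot\mathds{1}_\Lambda(i)$, so discarding the nonnegative term $\delta_p\,\mathds{1}_\Lambda(i)$ only weakens the bound and produces precisely the expression claimed in \cref{lem:general lower bound}; validity is thus preserved.

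The main obstacle is bookkeeping rather than conceptual: carrying the indicator $\mathds{1}_\Lambda(i)$ correctly through the simultaneous treatment of the cases $i\in\Lambda$ and $i\in\Omega$ --- in particular tracking exactly which of $i$'s own terms enter $\sum_{\lambda\in\Lambda\setminus\{i\}}$ and $\sum_{\omega\in\Omega\setminus\{i\}}$ --- and making the coefficient collapse to $|\Lambda|-1-\mathds{1}_\Lambda(i)$ without arithmetic slips. Every intermediate step merely adds valid inequalities, so the argument remains sound even when $|\Lambda|-1-\mathds{1}_\Lambda(i)$ is negative (e.g. $\Lambda=\varnothing$ or $\Lambda=\{i\}$); presenting the two cases separately and unifying afterwards is a cleaner if longer alternative. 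It is also worth remarking that for $i\in\Lambda$ this bound is dominated by the much simpler $\underline{\phi}_i=\pi_i+\delta_p$ of \cref{prop:lose base lower bound}, which is why \cref{tab:bound formulas} lists only the simpler per-case bounds.
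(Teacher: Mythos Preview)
Your approach is essentially identical to the paper's: both isolate $\phi_i$ via the constant-sum identity $\phi_i = N_{p'} - \sum_{\lambda\in\Lambda\setminus\{i\}}\phi_\lambda - \sum_{\omega\in\Omega\setminus\{i\}}\phi_\omega$ and then substitute the upper bounds from \cref{lem:general upper bound} (for $\lambda\in\Lambda\setminus\{i\}$) and \cref{prop:win base upper bound} (for $\omega\in\Omega\setminus\{i\}$) to obtain a lower bound. Your bookkeeping is in fact slightly sharper than the paper's own derivation: the substitution naturally produces the extra term $\mathds{1}_\Lambda(i)[\pi_i+\delta_p]$ rather than $\pi_i\cdot\mathds{1}_\Lambda(i)$, and you correctly note that dropping the nonnegative $\delta_p\,\mathds{1}_\Lambda(i)$ merely weakens the bound to match the lemma as stated, whereas the paper's chain of equalities silently writes $\pi_i\cdot\mathds{1}_\Lambda(i)$ at that step.

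One caveat on your closing commentary (not the proof itself): the claim that for $i\in\Lambda$ this bound is always dominated by $\underline{\phi}_i=\pi_i+\delta_p$ is not correct. \Cref{lem:lose alt lower bound} shows that when $\Lambda=\{i\}$ and $N_p+2\delta_p < N_{p'}+(K-1)(1-\delta_p)$, the bound of \cref{lem:general lower bound} is strictly tighter than \cref{prop:lose base lower bound}; the paper states explicitly that \cref{tab:bound formulas} omits it for simplicity and computational efficiency, not because it is dominated.
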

$\mathds{1}_S(i)$ denotes the indicator function of whether $i$ belongs to set $S$. To identify when this bound is tighter than our existing bounds from \cref{prop:lose base lower bound,prop:win base lower bound}, we propose the following \lcnamecrefs{lem:lose alt lower bound}, which are used later in \cref{thm:tight lower bound}:

\begin{restatable}{lemma}{lemlosealtlowerbound}\label{lem:lose alt lower bound}
    Given a battlefield $i\in\Lambda$, the bound $\underline{\phi}_i = N_{p'}  - N_p + \pi_i - \delta_p + (K-1)(1-\delta_p)$ is valid and a tighter lower bound compared to the bound in \cref{prop:lose base lower bound} if and only if $\Lambda=\{i\}$ and $N_p + 2\delta_p < N_{p'} + (K-1)(1-\delta_p)$.
\end{restatable}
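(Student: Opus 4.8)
The plan is to split the ``iff'' into a purely arithmetic half and a feasibility half. First I would record the arithmetic fact that the proposed value exceeds the \cref{prop:lose base lower bound} bound $\pi_i+\delta_p$ exactly when the stated inequality holds: rearranging $N_{p'}-N_p+\pi_i-\delta_p+(K-1)(1-\delta_p)>\pi_i+\delta_p$ collapses term by term to $N_p+2\delta_p<N_{p'}+(K-1)(1-\delta_p)$. Thus ``strictly tighter than \cref{prop:lose base lower bound}'' and the numeric condition are the same statement, and what remains is to show that, when this holds, the proposed value is a valid lower bound precisely when $\Lambda=\{i\}$ (when the numeric condition fails, the proposed value is $\le\pi_i+\delta_p\le\phi_i$, hence vacuously valid but not tighter, so ``valid and tighter'' fails for any $\Lambda$).

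For sufficiency, assume $\Lambda=\{i\}$, so $\Omega=I\setminus\{i\}$ and $|\Omega|=K-1$. For any feasible opponent decision, $\phi_i=N_{p'}-\sum_{\omega\in\Omega}\phi_\omega$, and the per-battlefield upper bounds $\phi_\omega\le\pi_\omega+\delta_p-1$ of \cref{prop:win base upper bound}, together with $\sum_{\omega\in\Omega}\pi_\omega=N_p-\pi_i$, give $\phi_i\ge N_{p'}-N_p+\pi_i+(K-1)(1-\delta_p)$, which is at least the proposed value since $\delta_p\ge0$; hence the proposed value is valid. (This is also \cref{lem:general lower bound} specialized to $\Lambda=\{i\}$.) With the arithmetic half, under the numeric condition the bound is then valid and strictly tighter.

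For necessity, suppose the proposed value is valid and strictly tighter; by the arithmetic half the numeric condition holds, so the proposed value strictly exceeds $\pi_i+\delta_p$. I would rule out $|\Lambda|\ge2$ by exhibiting a feasible opponent decision $\phi$ with $\phi_i=\pi_i+\delta_p$, contradicting validity. Fix $j\in\Lambda\setminus\{i\}$, set $\phi_\omega=0$ for every $\omega\in\Omega$, set $\phi_k=\pi_k+\delta_p$ for every $k\in\Lambda\setminus\{j\}$ (so in particular $\phi_i=\pi_i+\delta_p$), and set $\phi_j=N_{p'}-\sum_{k\in\Lambda\setminus\{j\}}(\pi_k+\delta_p)$ so the entries sum to $N_{p'}$. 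This $\phi$ reproduces the observed win/loss vector: $p$ wins each $\omega\in\Omega$ because $\phi_\omega=0<\pi_\omega+\delta_p$ (here $\pi_\omega+\delta_p\ge1$, as $\omega\in\Omega$ forces $\pi_\omega\ge1$ when $\delta_p=0$); $p$ loses each $k\in\Lambda\setminus\{j\}$ by construction; and $p$ loses $j$ since $\phi_j\ge\pi_j+\delta_p$, which follows from the inequality $N_{p'}\ge\sum_{\lambda\in\Lambda}(\pi_\lambda+\delta_p)$ --- automatically true because the \emph{actual} opponent decision already loses every $\Lambda$-battlefield for $p$, so its total is at least this sum. All entries are nonnegative integers, so $\phi$ is feasible; but $\phi_i=\pi_i+\delta_p$ lies strictly below the proposed value, contradicting validity. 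Hence $\Lambda=\{i\}$.

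The arithmetic and the sufficiency step are routine; the crux is the necessity step --- in particular verifying that ``dumping the surplus onto $j$'' really produces an opponent decision consistent with the semi-bandit feedback (correct win/loss vector, total exactly $N_{p'}$, nonnegative integer entries). That check rests on the budget inequality $N_{p'}\ge\sum_{\lambda\in\Lambda}(\pi_\lambda+\delta_p)$ and on the $\delta_p\in\{0,1\}$ corner case $\pi_\omega\ge1$ for $\omega\in\Omega$ when $\delta_p=0$; everything else is bookkeeping.
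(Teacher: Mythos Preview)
Your proof is correct and takes a genuinely different route from the paper's. The paper establishes necessity by starting from the general lower bound of \cref{lem:general lower bound}, dropping the nonpositive term $-\sum_{\omega\in\Omega}[\pi_\omega+\delta_p-1]$ via \cref{lem:win geq 0}, expanding $N_{p'}$ through \cref{prop:expanded N}, and then arguing that the resulting inequality $\delta_p<(|\Lambda|-2)\bigl(\sum_{\lambda\in\Lambda}[\pi_\lambda+\delta_p-\phi_\lambda]-\sum_{\omega\in\Omega}\phi_\omega\bigr)$ can hold only if $|\Lambda|-2<0$; it then specializes \cref{lem:general lower bound} to $\Lambda=\{i\}$ to obtain the stated formula and the numeric condition. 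By contrast, you bypass \cref{lem:general lower bound} entirely: for sufficiency you derive the bound directly from \cref{prop:win base upper bound} (in fact obtaining the slightly stronger value without the $-\delta_p$ slack, which a fortiori validates the stated one), and for necessity you build an explicit feasible opponent decision attaining $\phi_i=\pi_i+\delta_p$ whenever $|\Lambda|\ge2$. Your constructive argument is more elementary and, read literally, fits the lemma statement more tightly: it shows the \emph{specific} formula cannot be a valid lower bound when $|\Lambda|\ge2$ and the numeric condition holds, whereas the paper's chain of implications is really about when the \cref{lem:general lower bound} bound (which only equals the stated formula when $\Lambda=\{i\}$) improves on \cref{prop:lose base lower bound}. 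The paper's approach has the advantage of flowing mechanically from its general machinery; yours has the advantage of being self-contained and of demonstrating that $\pi_i+\delta_p$ is actually attained, so \emph{no} lower bound can beat \cref{prop:lose base lower bound} once $|\Lambda|\ge2$.
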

\begin{restatable}{lemma}{lemwinaltlowerbound}\label{lem:win alt lower bound}
    Given a battlefield $i\in\Omega$, the bound $\underline{\phi}_i = N_{p'} - N_p + \pi_i + (K-1)(1-\delta_p)$ is a valid and tighter lower bound compared to the bound in \cref{prop:win base lower bound} if and only if $\Lambda=\varnothing$ and $N_p < N_{p'} + \pi_i + (K-1)(1-\delta_p)$.
\end{restatable}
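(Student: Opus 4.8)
Set $B \coloneqq N_{p'} - N_p + \pi_i + (K-1)(1-\delta_p)$, the candidate lower bound. First I would strip away the word ``tighter.'' By \cref{prop:win base lower bound}, the baseline valid lower bound for a won battlefield $i\in\Omega$ is $\underline{\phi}_i = 0$, so $B$ is a strictly tighter lower bound exactly when $B>0$; rearranging $B>0$ gives precisely $N_p < N_{p'} + \pi_i + (K-1)(1-\delta_p)$, the second condition in the statement. Hence the asserted equivalence is just
\[
\bigl[\,B \text{ is a valid lower bound on } \phi_i\ \text{and}\ B>0\,\bigr] \iff \bigl[\,\Lambda=\varnothing\ \text{and}\ B>0\,\bigr],
\]
and since both sides are vacuously false when $B\le 0$, it suffices to prove: assuming $B>0$, the bound $B$ is valid if and only if $\Lambda=\varnothing$.

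For the ``if'' direction, assume $\Lambda=\varnothing$, so $\Omega = I$, $\lvert\Omega\setminus\{i\}\rvert = K-1$, and $\sum_{\omega\in\Omega}\pi_\omega = N_p$ by \cref{prop:expanded N}. I would substitute $\Lambda=\varnothing$ into the general lower bound of \cref{lem:general lower bound}: the indicator $\mathds{1}_\Lambda(i)$ is $0$, the sum $\sum_{\lambda\in\Lambda}[\pi_\lambda+\delta_p]$ is empty, and $\lvert\Lambda\rvert-1 = -1$, so the formula collapses to $\underline{\phi}_i = N_{p'} - \sum_{\omega\in\Omega\setminus\{i\}}[\pi_\omega+\delta_p-1]$. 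Using $\sum_{\omega\in\Omega\setminus\{i\}}[\pi_\omega+\delta_p-1] = (N_p-\pi_i) + (K-1)(\delta_p-1)$ and simplifying shows $\underline{\phi}_i$ equals exactly $B$. Since \cref{lem:general lower bound} certifies this quantity as a valid lower bound, $B$ is valid.

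For the ``only if'' direction, assume $B$ is valid and $B>0$, and suppose toward a contradiction that $\Lambda\neq\varnothing$. Let $\phi$ be the opponent's true decision, which by definition is consistent with the observed feedback, and fix some $\lambda_0\in\Lambda$. Define $\phi'$ by moving all resources from battlefield $i$ onto $\lambda_0$: $\phi'_i \coloneqq 0$, $\phi'_{\lambda_0} \coloneqq \phi_{\lambda_0} + \phi_i$, and $\phi'_j \coloneqq \phi_j$ otherwise. Then $\phi'$ has nonnegative integer entries summing to $N_{p'}$, and it induces the same win/loss pattern against $\pi$: battlefield $i$ remains a win for $p$ because $i\in\Omega$ forces $\pi_i+\delta_p\ge 1 > 0 = \phi'_i$; battlefield $\lambda_0$ remains a loss because $\phi'_{\lambda_0}\ge\phi_{\lambda_0}\ge\pi_{\lambda_0}+\delta_p$; all other battlefields are untouched. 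Thus $\phi'$ is consistent with the feedback yet has $\phi'_i = 0 < B$, contradicting the validity of $B$. Therefore $\Lambda=\varnothing$, which closes the equivalence.

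The one step needing real care is the ``only if'' direction: one must manufacture a feasible opponent allocation that drives $\phi_i$ down to $0$ without disturbing any observed outcome. The ``push the mass of $i$ onto a losing battlefield'' move does exactly this, and it works precisely because $i\in\Omega$ (lowering $\phi_i$ keeps the battlefield a win) and $\Lambda\neq\varnothing$ (there is a losing battlefield to absorb the mass, and raising its allocation keeps it a loss). Everything else---translating ``tighter'' into $B>0$ and the algebraic specialization of \cref{lem:general lower bound} at $\Lambda=\varnothing$---is routine.
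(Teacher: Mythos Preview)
Your proof is correct. The ``if'' direction coincides with the paper's: both specialize \cref{lem:general lower bound} at $\Lambda=\varnothing$ and simplify to $B$. The ``only if'' direction, however, diverges. The paper stays inside the framework of \cref{lem:general lower bound}: it shows algebraically, via \cref{lem:win geq 0} and \cref{prop:expanded N}, that the general lower-bound expression can exceed zero only when $|\Lambda|-1<0$, forcing $\Lambda=\varnothing$. You instead exhibit a concrete feasible allocation $\phi'$ with $\phi'_i=0$ whenever $\Lambda\neq\varnothing$, by pushing all mass from battlefield $i$ onto some $\lambda_0\in\Lambda$; this directly witnesses that no positive lower bound---in particular $B$---can be valid. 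Your route is more elementary and addresses the lemma as literally stated about the specific quantity $B$, whereas the paper's argument is really about when the general formula of \cref{lem:general lower bound} is useful, identifying it with $B$ only after establishing $\Lambda=\varnothing$. The paper's approach has the advantage of reusing its existing machinery uniformly; yours has the advantage of making the obstruction tangible and of not depending on \cref{lem:general lower bound} at all for the necessity direction.
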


These \lcnamecrefs{lem:lose alt lower bound} can be combined to identify a tight lower bound $\underline{\phi}_i$:

\begin{restatable}{theorem}{thmtightlowerbound}\label{thm:tight lower bound}
    $\underline{\phi}_i = N_{p'}  - N_p + \pi_i - \delta_p\cdot\mathds{1}_\Lambda(i) + (K-1)(1-\delta_p)$ is a tighter lower bound compared to the bounds in \cref{prop:lose base lower bound,prop:win base lower bound} given a battlefield $i\in I$ if and only if $\Lambda \in \{\varnothing,\{i\}\}$ and $N_p + 2\delta_p\cdot\mathds{1}_\Lambda(i)< N_{p'} + \pi_i(1-\mathds{1}_\Lambda(i)) + (K-1)(1-\delta_p)$.
\end{restatable}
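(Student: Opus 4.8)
The plan is to combine Lemmas~\ref{lem:lose alt lower bound} and~\ref{lem:win alt lower bound} by a case split on whether $i\in\Lambda$ or $i\in\Omega$, and then show that the unified formula and the unified condition in the theorem statement specialize correctly to each case. First I would observe that the proposed bound $\underline{\phi}_i = N_{p'} - N_p + \pi_i - \delta_p\cdot\mathds{1}_\Lambda(i) + (K-1)(1-\delta_p)$ reduces, when $i\in\Lambda$ (so $\mathds{1}_\Lambda(i)=1$), to exactly the bound $N_{p'} - N_p + \pi_i - \delta_p + (K-1)(1-\delta_p)$ from Lemma~\ref{lem:lose alt lower bound}, and when $i\in\Omega$ (so $\mathds{1}_\Lambda(i)=0$), to exactly the bound $N_{p'} - N_p + \pi_i + (K-1)(1-\delta_p)$ from Lemma~\ref{lem:win alt lower bound}. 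So validity of the bound is immediate in each case from the corresponding lemma.

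Next I would check that the unified condition ``$\Lambda\in\{\varnothing,\{i\}\}$ and $N_p + 2\delta_p\cdot\mathds{1}_\Lambda(i) < N_{p'} + \pi_i(1-\mathds{1}_\Lambda(i)) + (K-1)(1-\delta_p)$'' matches the lemma conditions case-by-case. When $i\in\Lambda$: the requirement $\Lambda\in\{\varnothing,\{i\}\}$ forces $\Lambda=\{i\}$ (since $i\in\Lambda$ rules out $\Lambda=\varnothing$), and with $\mathds{1}_\Lambda(i)=1$ the inequality becomes $N_p + 2\delta_p < N_{p'} + (K-1)(1-\delta_p)$, which is precisely the condition in Lemma~\ref{lem:lose alt lower bound}. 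When $i\in\Omega$: the requirement $\Lambda\in\{\varnothing,\{i\}\}$ forces $\Lambda=\varnothing$ (since $i\notin\Lambda$ rules out $\Lambda=\{i\}$), and with $\mathds{1}_\Lambda(i)=0$ the inequality becomes $N_p < N_{p'} + \pi_i + (K-1)(1-\delta_p)$, precisely the condition in Lemma~\ref{lem:win alt lower bound}. Hence in each case the theorem's hypothesis is logically equivalent to the applicable lemma's hypothesis, and the lemma delivers both validity and the ``tighter than Propositions~\ref{prop:lose base lower bound} and~\ref{prop:win base lower bound}'' conclusion.

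I would then assemble the ``if and only if'': the forward direction follows because each lemma's condition is necessary and sufficient for its respective tighter bound, so if the theorem's condition fails then either $\Lambda\notin\{\varnothing,\{i\}\}$ (in which case neither lemma applies and the general bound of Lemma~\ref{lem:general lower bound} does not improve on the base bounds in the relevant regime) or the numeric inequality fails (in which case the lemma explicitly tells us the candidate bound is not tighter). The reverse direction is the case analysis just described. The one point requiring a little care — and the main obstacle — is the ``only if'' direction when $\Lambda\notin\{\varnothing,\{i\}\}$: I must confirm that whenever $|\Lambda|\geq 2$, or $|\Lambda|=1$ with the single losing battlefield different from $i$, the tightest derivable lower bound is genuinely no better than the base bounds of Propositions~\ref{prop:lose base lower bound} and~\ref{prop:win base lower bound}; this amounts to checking that the general expression in Lemma~\ref{lem:general lower bound} collapses to (or falls below) the base bound in those regimes, which is a short sign-analysis of the $(|\Lambda|-1-\mathds{1}_\Lambda(i))$ coefficient and the $\sum_{\omega\in\Omega\setminus\{i\}}$ term, rather than anything conceptually deep.
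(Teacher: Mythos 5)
Your proposal is correct and follows essentially the same route as the paper, which proves \cref{thm:tight lower bound} by directly combining \cref{lem:lose alt lower bound,lem:win alt lower bound} via the case split on $i\in\Lambda$ versus $i\in\Omega$; your verification that the unified bound and unified condition specialize correctly in each case (with $\Lambda\in\{\varnothing,\{i\}\}$ forcing $\Lambda=\{i\}$ or $\Lambda=\varnothing$ respectively) is exactly the intended argument, just spelled out more explicitly than the paper's one-line justification.
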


We note that the special condition identified in \cref{lem:alt win upper bound} and used in \cref{thm:tight upper bound}, and the bound identified in \cref{thm:tight lower bound} were not implemented in our experimentation (\cref{sec:empirical analysis}) and are omitted from \cref{tab:bound formulas}. This was done for the sake of simplicity and computational efficiency.

\section{Empirical Analysis of Estimate Quality}\label{sec:empirical analysis}


We conduct a proof-of-concept evaluation of our estimation metrics using simulated \textit{semi-bandit} \CB games. In our estimate calculations we use the graph-pruning approach proposed in \cref{sec:modeling with graphs} to narrow down the number of feasible opponent decisions. Leveraging the bounds identified in \cref{sec:estimating opponent allocations}, we greatly improve the efficiency of computing our estimates and, in theory, their accuracy by ignoring infeasible decisions.

To perform our simulations, we consider a uniformly random decision-maker and three online resource allocation algorithms: \ttsc{MARA} \cite{daganBetterResourceAllocation2018}, \ttsc{CUCB-DRA} \cite{zuoCombinatorialMultiarmedBandits2021}, and \ttsc{Edge} \cite{vuCombinatorialBanditsSequential2019}. These algorithms are used purely for the purpose of generating allocations with varying behaviors. For the purposes of this paper, \emph{we exclusively evaluate the accuracy of our estimators and do not consider the comparative performance of these allocation algorithms}.

In our experimentation, we denote players A and B such that $\delta_A=0$ and $\delta_B=1$. For each game we simulate $T=1000$ sequential rounds. The set of simulated game configurations is described in \cref{tab:experiment param combos}. We ensure that $N_A \geq N_B$ to prevent player B from having a significant advantage due to both winning draws and having an excess of resources. Every algorithm competes against every other algorithm for each game configuration, resulting in 16 matchups per configuration, making a total of 96 simulated games. Evaluating our estimates for both players gives us a total of 192 data points for each metric.

\begin{table}[t]
\centering
\begin{tabular}{ccccc}
\toprule
$K$ & $N_A$ & $N_B$ & $|\Pi_A|$ & $|\Pi_B|$ \\ \cmidrule(lr){0-2} \cmidrule(lr){4-5}
$3$ & $10$  & $10$  & $66$      & $66$      \\
$3$ & $15$  & $10$  & $136$     & $66$      \\
$3$ & $15$  & $15$  & $136$     & $136$     \\ 
$5$ & $15$  & $15$  & $3876$    & $3876$    \\
$5$ & $20$  & $15$  & $10626$   & $3876$    \\
$5$ & $20$  & $20$  & $10626$   & $10626$   \\ \bottomrule
\end{tabular}
\caption{Simulated \CB game configurations.}
\label{tab:experiment param combos}
\end{table}

We use small values of $K$ due to the complexity of computing our metrics and their estimates increasing exponentially with respect to the number of battlefields (as described in \cref{sec:modeling with graphs}). No repetitions were performed as the large number of rounds ($T=1000$) effectively serves the same purpose in this context.

Our algorithm implementation details are as follows: For \ttsc{MARA} we set the required input parameter $c=2.5$ to match the settings used by \citet{daganBetterResourceAllocation2018}. However, to convert from continuous allocations to discrete allocations, we used the procedure described in \cref{alg:discretization}.\footnoteref{foot:supplementary material} For \ttsc{CUCB-DRA}, we implement a naive oracle which selects a decision by directly computing the mean payoff of the algorithm's current estimates for a large sample of possible decisions and selecting the decision with the maximum believed mean payoff. For \ttsc{Edge} we used set the parameter $\gamma=0.25$ (i.e., a $25\%$ chance of exploring each round) and the exploration distribution $\mu$ to be uniform, i.e., $\mu=\mathcal{U}(\Pi_p)$.

To analyze the quality of our metrics, we analyze the error of our estimated metrics compared to their true counterparts (e.g., Observable Max Payoff versus Max Payoff). To evaluate the magnitude of our error, we utilize normalized root mean square error (NRMSE). That is, the RMSE divided by the of the mean true value:
\begin{equation}
    \text{NRMSE} \coloneqq \frac{1}{\text{mean}(y)} \sqrt{\frac{\sum_t (y'_t-y_t)^2}{N}}.
\end{equation}
To evaluate the deviation of our errors, we utilize relative residual standard deviation (RRSD). That is, the standard deviation of residuals divided by the mean of the true value:
\begin{equation}
    \text{RRSD} \coloneqq \frac{1}{\text{mean}(y)} \sqrt{\frac{\sum_t((y'_t-y_t)-\text{mean}(y'-y))^2}{N}},
\end{equation}
where $y_t$ is the true metric (i.e., Max Payoff or Expected Payoff) for round $t$ and $y'_t$ is the estimated metric (i.e., Observable Max Payoff, Supremum Payoff, or Expected Payoff).

We normalize by the mean true metric to compensate for artificial increases in the magnitude of error due to a larger number of battlefields. This is because the number of possible decisions grows exponentially with the number of battlefields, leading to a larger range of possible values and payoffs, but also greater magnitude of error and deviation in our estimates. The exact results of these simulations are provided in \crefrange{tab:first result table}{tab:last result table}.\footnoteref{foot:supplementary material} 

\subsection{Observable Max Payoff}



Across all matchups and game configurations, RMSE and RRSD of Observable Max Payoff were effectively zero. The maximum RMSE and RRSD observed throughout all simulations were 0.01 for both measures. Notably, these values were only observed for games with $K=5$ battlefields (\cref{subtab:non-zero max payoff 1,subtab:non-zero max payoff 2}). This corroborates our hypothesis that error should scale with the number of battlefields. These results demonstrate that Observable Max Payoff is highly effective at estimating the true value of Max Payoff. 

\subsection{Supremum Payoff}

For all but two game configurations, every matchup produced RMSE and RRSD of effectively zero for Supremum Payoff. However, player A observed notable errors in the game with configuration $K=5$, $N_A=15$, and $N_B=15$ (\cref{subtab:high supremum payoff A}), while player B observed similar behavior with configuration $K=5$, $N_A=20$, and $N_B=15$ (\cref{subtab:high supremum payoff B}). This behavior can be explained due by the inherently pessimistic nature of Supremum Payoff discussed in \cref{sec:estimating metrics}. Specifically, it becomes increasingly likely that there exists opponent decisions that would induce relatively small Max Payoff values as the number of feasible opponent decisions increases (as it does exponentially with respect to $K$, as discussed in \cref{sec:modeling with graphs}). Even so, the interpretation of Supremum Payoff as the worst-case Max Payoff and the fact that the observed errors are relatively small continues to highlight the usefulness of Supremum Payoff.

\subsection{Observable Expected Payoff}

Across nearly all matchups and game configurations, the observed NRMSE and RRSD of Observable Expected Payoff compared to true Expected Payoff are very low. NRMSE was below 0.20 in 186 out of the 192 data points, and 163 were below 0.15. Notably, all of the data points with NRMSE greater than 0.20 occurred in matchups with the \ttsc{CUCB-DRA} algorithm. The maximum observed NRMSE was 0.259. The fact that \ttsc{CUCB-DRA} is highly non-uniform in the decision space suggests that the larger errors are mostly due to the UDA. Even so, the associated error is relatively small, suggesting the UDA's impact to be minor. Additionally, in 183 out of 192 data points RRSD was below 0.15, and 156 were below 0.10. The maximum observed RRSD was 0.170. This indicates that Observable Expected Payoff is a good proxy for the true value of Expected Payoff.

\section{Conclusion}\label{sec:conclusion}

In this paper, we developed a general definition for payoff in the context of mutually adversarial games, as well as a number of useful performance evaluation metrics and means for approximating them which utilize general payoff. Under the context of the \CB game, we proposed an efficient approach for identifying feasible opponent decisions from observed feedback to improve the accuracy of our payoff metric estimates. To that end, the existence and usability of a number of bounds on opponent actions based on semi-bandit feedback for the \CB game are proven for use with the decision graph pruning process.



\bibliography{references}

\renewcommand{\thesection}{S\arabic{section}}
\renewcommand{\thetable}{S\arabic{table}}
\renewcommand{\thefigure}{S\arabic{figure}}

\onecolumn

\begin{center}
\textbf{\large Supplemental Materials: General Performance Evaluation for Competitive Resource Allocation Games via Unseen Payoff Estimation}
\end{center}

\section{\CB Max Payoff Algorithm}\label{supp:max payoff alg}


Calculating the Max Payoff for a specific decision $\phi$ is trivial for the \CB game, as described in \cref{alg:best_payoff}. The optimal strategy is the greedy algorithm: Starting from the battlefield in which the opponent allocated the least resources, allocate the minimum number of resources required to win; repeat this process until the player does not have enough resources to overcome the opponent. Note that if the allocations are already sorted, this process can be done in linear time $\mathcal{O}(K)$. However, if sorting is required, the process becomes limited by the time complexity of the sorting algorithm (typically $\mathcal{O}(K\log K)$).

\begin{algorithm}[htb!p]
\centering
\caption{Max Payoff computation}
\label{alg:best_payoff}
\begin{algorithmic}[1]

\REQUIRE Opponent decision $\phi$, player resources $N$, player $p$
\ENSURE The maximum achievable payoff

\STATE $L \gets 0$
\FOR{$\phi_i\in\phi$ in ascending order of $\phi_i$}
    \IF{$N > \phi_i - \delta_p$}
        \STATE $L \gets L + 1$
        \STATE $N \gets N - \phi_i - \delta_p + 1$ 
    \ELSE
        \RETURN $L$
    \ENDIF
\ENDFOR
\RETURN $L$

\end{algorithmic}
\end{algorithm}

\section{Dead-End Pruning Algorithm}

For $K$ battlefields and $N$ resources available to the player associated with the graph, \cref{alg:dead end pruning} can be applied with complexity $\mathcal{O}(K\cdot N)$ provided that removing all edges entering a particular vertex from the set of edges can be done in constant time. If not, then the algorithm has worst-case complexity $\mathcal{O}(E)$, which is equivalent to $\mathcal{O}(K\cdot N^2)$.

\begin{algorithm}[htb!p]
\caption{Decision graph dead-end pruning}
\label{alg:dead end pruning}
\begin{algorithmic}[1]

\REQUIRE Decision graph $G$, battlefields $K$, resources $N$ 
\ENSURE Decision graph with dead-ends removed

\STATE $\mathcal{V} \gets \ttsc{Vertices}(G)$
\STATE $\mathcal{E} \gets \ttsc{Edges}(G)$
\FOR{$i\gets K-1\  \TO\ 0$}
    \FOR{$n \gets N\ \TO\ 0$}
        \IF{$\deg^+(v_{i,n}) = 0$}
            \STATE $\mathcal{V} \gets \mathcal{V} \setminus v_{i,n}$ 
            \STATE $\mathcal{E} \gets \mathcal{E} \setminus \ttsc{InEdges}(v_{i,n})$
        \ENDIF
    \ENDFOR
\ENDFOR
\RETURN $\ttsc{BuildGraph}(\mathcal{V}, \mathcal{E})$

\end{algorithmic}
\end{algorithm}

\section{\ttsc{MARA} Discretization}

First, \cref{alg:discretization} normalizes and scales the continuous allocation vector produced by \ttsc{MARA} with respect to the available resources $N$ (the $\ttsc{Normalize}$ function). Next, the floor function is applied to each element of the scaled allocation vector to get the integer portion of the allocations. The non-integer remainder is then normalized. If the allocated resources do not sum up to $N$, the normalized remainders are treated as probabilities from which to sample allocations without replacement and increment by 1 until all resources have been used. This treats the magnitude of the remainder as the proportional desire to allocate more resources to a particular battlefield. The probabilistic allocation of remaining resources is based on the assumption that, on average, the final discrete allocations will converge to the value of the continuous allocation times $N$.

{

\renewcommand{\algorithmicwhile}{\textbf{Function}}
\renewcommand{\algorithmicdo}{:}
\renewcommand{\algorithmicendwhile}[1]{\algorithmicend\ #1}

\begin{algorithm}[hbt!p]
\begin{algorithmic}[1]
\caption{Discretize continuous allocations}
\label{alg:discretization}

\REQUIRE Continuous allocations vector $X$, resources $N$
\ENSURE The discretized allocation vector

\WHILE{\ttsc{Discretize}($X,N$)}
    \STATE $X \gets N \cdot \ttsc{Normalize}(X)$
    \STATE $X' \gets \lfloor X \rfloor$
    \STATE $P \gets \ttsc{Normalize}(X\;\mathrm{mod}\;1)$
    \IF{$\sum X' > N$}
        \STATE Sample $N-\sum X'$ indices $I$ from $X'$ with probability distribution $P$ without replacement 
        \FOR{$i\in I$}
            \STATE $X'_i \gets X'_i + 1$
        \ENDFOR
    \ENDIF
    \RETURN $X'$
\ENDWHILE{\ttsc{Discretize}}

\WHILE{\ttsc{Normalize}($Y$)}
    \RETURN $Y / \sum Y$
\ENDWHILE{\ttsc{Normalize}}

\end{algorithmic}
\end{algorithm}

}

\section{Computation Hardware/Software}

Utilized compute nodes ran Ubuntu 20.04 LTS with AMD EPYC 7543 and Intel Xeon Gold 6248 CPUs, and 64 GB of random access memory.


\section{Proofs}\label{supp:proofs}

\propexpandedn*
\proplosebaselowerbound*
\propwinbaselowerbound*
\proplosebaseupperbound*
\propwinbaseupperbound*
\begin{prop}\label{prop:disjoint sets}
    $\Omega$ and $\Lambda$ are a partition of $I$ (i.e., $\Omega\cup\Lambda = I$ and $\Omega\cap\Lambda = \emptyset$).
\end{prop}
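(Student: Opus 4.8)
The plan is to unpack the definitions of $\Omega$ and $\Lambda$ from \cref{sec:problem formulation} and observe that they are defined by complementary conditions on each battlefield. Recall that for a fixed round, battlefield $i\in I$ is won by player $p$ (so $i\in\Omega$) exactly when $\pi_i + \delta_p > \phi_i$, and otherwise battlefield $i$ is lost by $p$ to their opponent $p'$ (so $i\in\Lambda$), which is precisely the condition $\pi_i + \delta_p \le \phi_i$.

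First I would fix an arbitrary battlefield $i\in I$ and appeal to the trichotomy/dichotomy of the ordering on $\mathbb{N}_0$ (the allocations are discrete): exactly one of $\pi_i + \delta_p > \phi_i$ or $\pi_i + \delta_p \le \phi_i$ holds. The first case places $i$ in $\Omega$ and the second places $i$ in $\Lambda$, and these cases are mutually exclusive and jointly exhaustive. Hence every $i\in I$ lies in exactly one of $\Omega,\Lambda$.

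From "exactly one" I would then read off both required facts: every $i\in I$ is in $\Omega\cup\Lambda$, so $\Omega\cup\Lambda = I$ (the reverse inclusion $\Omega\cup\Lambda\subseteq I$ being immediate since both are defined as subsets of $I$); and no $i$ is in both, so $\Omega\cap\Lambda=\emptyset$. Together these establish that $\{\Omega,\Lambda\}$ partitions $I$.

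I do not expect any real obstacle here — the statement follows directly from the fact that the \CB payoff rule assigns each battlefield a definite winner. The only thing to be careful about is to state explicitly that the loss condition is the logical negation of the win condition (rather than, say, a strict-inequality draw carve-out), which is exactly how $\delta_p$ was introduced to make the per-battlefield outcome deterministic; this is what makes the two sets complementary within $I$ rather than merely disjoint.
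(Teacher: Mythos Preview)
Your proposal is correct and is essentially the only reasonable approach: the paper itself does not give a proof of this proposition, listing it alongside \cref{prop:expanded N,prop:lose base lower bound,prop:win base lower bound,prop:lose base upper bound,prop:win base upper bound} as following trivially from the definitions. Your unpacking of the win/lose conditions as logical complements via the $\delta_p$ tie-breaking rule is exactly the definitional content the paper is relying on.
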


This leads to the following proofs regarding the upper bound on any opponent allocation $\phi_i$:

\lemgeneralupperbound*
\begin{proof}
    Consider that we can pull out $\phi_i$ from $\sum_{\lambda\in\Lambda}\pi_\lambda$ and $\sum_{\omega\in\Omega}\phi_\omega$ in \cref{prop:expanded N}, as $\phi_i$ must be present in either $\Omega$ \textit{or} $\Lambda$, but not both (\cref{prop:disjoint sets}). This gives us $N_{p'} = \phi_i + \sum_{\lambda\in\Lambda\setminus\{i\}}\phi_\lambda + \sum_{\omega\in\Omega\setminus\{i\}}\phi_\omega$. If we move $\phi_i$ to stand alone on the LHS, we get $\phi_i = N_{p'} - \sum_{\lambda\in\Lambda\setminus\{i\}}\phi_\lambda - \sum_{\omega\in\Omega\setminus\{i\}}\phi_\omega$. Notice that we may find a valid upper bound on $\phi_i$ by finding a minimal upper bound (i.e., supremum) on this equation, which we can do by using our existing lower bounds from \cref{prop:lose base lower bound,prop:win base lower bound}. That is,
    \begin{equation}
    \begin{split}
        \overline{\phi}_i &= \sup\left(N_{p'} - \sum_{\lambda\in\Lambda\setminus\{i\}}\phi_\lambda - \sum_{\omega\in\Omega\setminus\{i\}}\phi_\omega\right) \\
            &= N_{p'} - \sum_{\lambda\in\Lambda\setminus\{i\}}\underline{\phi}_\lambda - \sum_{\omega\in\Omega\setminus\{i\}}\underline{\phi}_\omega \\
            &= N_{p'} - \sum_{\lambda\in\Lambda\setminus\{i\}}[\pi_\lambda+\delta_p] - \sum_{\omega\in\Omega\setminus\{i\}}0 \\
            &= N_{p'} - \sum_{\lambda\in\Lambda\setminus\{i\}}[\pi_\lambda+\delta_p].
    \end{split}
    \end{equation}
    Thus $\phi_i \leq \overline{\phi}_i = N_{p'} - \sum_{\lambda\in\Lambda\setminus\{i\}}[\pi_\lambda+\delta_p]$.

    We can also verify this via contradiction: Consider the inverse case where $\phi_i > N_{p'} - \sum_{\lambda\in\Lambda\setminus\{i\}}[\pi_\lambda+\delta_p]$. This can be adjusted as follows:
    \begin{equation}
    \begin{alignedat}{3}
            && \phi_i > N_{p'} - \sum_{\lambda\in\Lambda\setminus\{i\}}[\pi_\lambda+\delta_p] \\
        \iff && \phi_i + \sum_{\lambda\in\Lambda\setminus\{i\}}[\pi_\lambda+\delta_p] >  N_{p'}.
    \end{alignedat}
    \end{equation}
    Recall that $\pi_\lambda + \delta_p \leq \phi_\lambda$ for all $i\in\Lambda$. Thus,
    \begin{equation}
    \begin{alignedat}{4}
            && \phi_i + \sum_{\lambda\in\Lambda\setminus\{i\}}\phi_i &\geq \phi_i + \sum_{\lambda\in\Lambda\setminus\{i\}}[\pi_\lambda+\delta_p] >  N_{p'} \\
        \implies && \phi_i + \sum_{\lambda\in\Lambda\setminus\{i\}}\phi_i &>  N_{p'} \\
        \iff && \phi_i + \sum_{\lambda\in\Lambda\setminus\{i\}}\phi_i &>  \sum_{\lambda\in\Lambda}\phi_\lambda + \sum_{\omega\in\Omega}\phi_\omega && \text{(by \cref{prop:expanded N}})\\
        \iff && \phi_i + \sum_{\lambda\in\Lambda\setminus\{i\}}\phi_i &>  \phi_i + \sum_{\lambda\in\Lambda\setminus\{i\}}\phi_\lambda + \sum_{\omega\in\Omega\setminus\{i\}}\phi_\omega \\
        \iff && 0 &> \sum_{\omega\in\Omega\setminus\{i\}}\phi_\omega.
    \end{alignedat}
    \end{equation}
    However, $\phi_i\geq0$ for all $i\in I$ by definition, therefore $0 \leq \sum_{\omega\in\Omega\setminus\{i\}}\phi_\omega$. Thus $0 \not >  \sum_{\omega\in\Omega\setminus\{i\}}\phi_\omega$. This is a contradiction, proving $\overline{\phi}_i = N_{p'} - \sum_{\lambda\in\Lambda\setminus\{i\}}[\pi_\lambda+\delta_p]$ is a valid upper bound.
\end{proof}

Note that we want to identify when this bound is tighter than our existing bounds from \cref{prop:lose base upper bound,prop:win base lower bound}. This being the case, we attempt to identify on what, if any, conditions it is tighter. We start by focusing on the case where $i\in\Lambda$, which will later be used in \cref{thm:tight lower bound}:

\lemaltloseupperbound*
\begin{proof}
    We wish to show that $N_{p'} - \sum_{\lambda\in\Lambda\setminus\{i\}}[\pi_\lambda+\delta_p] \leq N_{p'}$. This simplifies to $0  \leq \sum_{\lambda\in\Lambda\setminus\{i\}}[\pi_\lambda+\delta_p]$. Recall that $\delta_p\in\{0,1\}$ and $\pi_i\geq 0$ by definition. Therefore we can always ensure that $0 \leq \sum_{\lambda\in\Lambda\setminus\{i\}}[\pi_\lambda+\delta_p]$.
\end{proof}

We now focus on the case where $i\in\Omega$, which will also later be used in \cref{thm:tight lower bound}:

\lemaltwinupperbound*
\begin{proof}
    Notice that $i\in\Omega \iff i\not\in\Lambda$ (\cref{prop:disjoint sets}), therefore $\overline{\phi}_i = N_{p'} - \sum_{\lambda\in\Lambda\setminus\{i\}}[\pi_\lambda+\delta_p] = N_{p'} - \sum_{\lambda\in\Lambda}[\pi_\lambda+\delta_p]$ (\cref{lem:general upper bound}). In order to validate that this is a tighter valid upper bound compared to $\overline{\phi}_i = \pi_i+\delta_p-1$ (\cref{prop:win base upper bound}), we need the former to be less than the latter. That is,
    \begin{equation}
    \begin{alignedat}{3}
            && N_{p'} - \sum_{\lambda\in\Lambda}[\pi_\lambda+\delta_p] &<  \pi_i+\delta_p-1\\
        \iff && N_{p'} + 1  &< \pi_i + \delta_p + \sum_{\lambda\in\Lambda}[\pi_\lambda+\delta_p] \\
         \\
        \iff && N_{p'} + 1  &< \sum_{\lambda\in\Lambda\cup\{i\}}[\pi_\lambda+\delta_p]. 
    \end{alignedat}
    \end{equation}
\end{proof}

Notably, we can easily reconcile the upper bounds from \cref{lem:alt lose upper bound,lem:alt win upper bound}:

\thmtightupperbound*
\begin{proof}
    It follows trivially from combining \cref{lem:alt lose upper bound,lem:alt win upper bound}.
\end{proof}

Notice that we may use a similar approach to \cref{lem:general upper bound,thm:tight lower bound} to find potential lower bounds as well:

\thmgenerallowerbound*
\begin{proof}
    Consider that we can pull out $\phi_i$ from $\sum_{\lambda\in\Lambda}\pi_\lambda$ and $\sum_{\omega\in\Omega}\phi_\omega$ in \cref{prop:expanded N}, as $\phi_i$ must be present in either $\Omega$ \textit{or} $\Lambda$, but not both (\cref{prop:disjoint sets}). This gives us $N_{p'} = \phi_i + \sum_{\lambda\in\Lambda\setminus\{i\}}\phi_\lambda + \sum_{\omega\in\Omega\setminus\{i\}}\phi_\omega$. If we move $\phi_i$ to stand alone on the LHS, we get $\phi_i = N_{p'} - \sum_{\lambda\in\Lambda\setminus\{i\}}\phi_\lambda - \sum_{\omega\in\Omega\setminus\{i\}}\phi_\omega$. Notice that we may find a valid lower bound on $\phi_i$ by finding a maximal lower bound (i.e., infimum) this equation, which we can do by using our existing upper bounds from \cref{lem:general upper bound,prop:win base upper bound}. That is,
    \begin{equation}\label{eq:general lower bound}
    \begin{split}
        \underline{\phi}_i &= \inf\left(N_{p'} - \sum_{\lambda\in\Lambda\setminus\{i\}}\phi_\lambda - \sum_{\omega\in\Omega\setminus\{i\}}\phi_\omega\right) \\
            &= N_{p'} - \sum_{\lambda\in\Lambda\setminus\{i\}}\overline{\phi}_\lambda - \sum_{\omega\in\Omega\setminus\{i\}}\overline{\phi}_\omega \\
            &= N_{p'} - \sum_{j\in\Lambda\setminus\{i\}}\left[N_{p'} - \sum_{\lambda\in\Lambda\setminus\{j\}}[\pi_\lambda+\delta_p]\right] - \sum_{\omega\in\Omega\setminus\{i\}}[\pi_\omega+\delta_p-1] \\
            &= N_{p'} - N_{p'}(|\Lambda|-\mathds{1}_\Lambda(i))  + \sum_{j\in\Lambda\setminus\{i\}}\sum_{\lambda\in\Lambda\setminus\{j\}}[\pi_\lambda+\delta_p] - \sum_{\omega\in\Omega\setminus\{i\}}[\pi_\omega+\delta_p-1] \\
            &= - N_{p'}(|\Lambda|-1-\mathds{1}_\Lambda(i))  + (|\Lambda|-1-\mathds{1}_\Lambda(i))\sum_{\lambda\in\Lambda}[\pi_\lambda+\delta_p] + \pi_i\cdot\mathds{1}_\Lambda(i) - \sum_{\omega\in\Omega\setminus\{i\}}[\pi_\omega+\delta_p-1] \\
            &= \pi_i\cdot\mathds{1}_\Lambda(i) + (|\Lambda|-1-\mathds{1}_\Lambda(i))\left(\sum_{\lambda\in\Lambda}[\pi_\lambda+\delta_p]- N_{p'}\right) - \sum_{\omega\in\Omega\setminus\{i\}}[\pi_\omega+\delta_p-1].
    \end{split}
    \end{equation}
    
    Thus $\underline{\phi}_i = \pi_i\cdot\mathds{1}_\Lambda(i) + (|\Lambda|-1-\mathds{1}_\Lambda(i))\left(\sum_{\lambda\in\Lambda}[\pi_\lambda+\delta_p]- N_{p'}\right) - \sum_{\omega\in\Omega\setminus\{i\}}[\pi_\omega+\delta_p-1]$ is a valid lower bound for any $i\in I$.
\end{proof}

The following \lcnamecref{lem:win geq 0} will be used in \cref{lem:lose alt lower bound,lem:win alt lower bound}:
\begin{lemma}\label{lem:win geq 0}
    $\pi_\omega+\delta_p-1 \geq 0$ for any battlefield $\omega\in\Omega$.
\end{lemma}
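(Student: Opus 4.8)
The plan is to prove $\pi_\omega + \delta_p - 1 \geq 0$ for any $\omega \in \Omega$ by reasoning about what it means for player $p$ to win battlefield $\omega$. First I would recall the definition of the payoff function: player $p$ wins battlefield $i$ exactly when $\pi_i + \delta_p > \phi_i$. So for $\omega \in \Omega$ (a battlefield that $p$ won), we have $\pi_\omega + \delta_p > \phi_\omega$. Since all allocations are discrete, i.e., $\pi_\omega, \phi_\omega \in \mathbb{N}_0$, and $\delta_p \in \{0,1\}$, the quantity $\pi_\omega + \delta_p$ is a nonnegative integer and strictly exceeds the nonnegative integer $\phi_\omega \geq 0$. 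Therefore $\pi_\omega + \delta_p \geq \phi_\omega + 1 \geq 1$, which rearranges to $\pi_\omega + \delta_p - 1 \geq 0$, as desired.

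The key steps in order are: (1) invoke the win condition $\pi_\omega + \delta_p > \phi_\omega$ from the definition of $L_p$; (2) use integrality of $\pi_\omega$, $\phi_\omega$, and $\delta_p$ to upgrade the strict inequality to $\pi_\omega + \delta_p \geq \phi_\omega + 1$; (3) use $\phi_\omega \geq 0$ to conclude $\pi_\omega + \delta_p \geq 1$; (4) subtract $1$ from both sides. This is essentially a one-line argument, so there is no real obstacle here — the only thing to be careful about is making explicit that the strict inequality over integers gives a $+1$ gap, and that one should not accidentally assume $\pi_\omega \geq 1$ directly (it could be that $\pi_\omega = 0$ and $\delta_p = 1$ with $\phi_\omega = 0$, which is still consistent with winning).

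If a slightly more uniform phrasing is wanted, one can note that $\phi_\omega \geq 0$ combined with the win condition immediately yields $\pi_\omega + \delta_p > 0$, and since $\pi_\omega + \delta_p$ is an integer this means $\pi_\omega + \delta_p \geq 1$. Either way the conclusion $\pi_\omega + \delta_p - 1 \geq 0$ follows, and this lemma is then available to justify that the summand $\pi_\omega + \delta_p - 1$ appearing in \cref{prop:win base upper bound} and in the lower-bound expression of \cref{lem:general lower bound} is nonnegative, which is exactly what is needed downstream in \cref{lem:lose alt lower bound,lem:win alt lower bound}.
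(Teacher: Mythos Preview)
Your argument is correct. It differs from the paper's proof, which proceeds by contradiction and splits into the two cases $\delta_p=1$ and $\delta_p=0$, deriving $\pi_\omega<0$ in the first case and $\phi_\omega<0$ in the second. Your direct route is more economical: by invoking the win condition $\pi_\omega+\delta_p>\phi_\omega$ together with $\phi_\omega\geq 0$ and integrality, you get $\pi_\omega+\delta_p\geq 1$ in one stroke, with no case split or contradiction needed. The paper's case analysis makes the role of $\delta_p$ more explicit, but your version is shorter and arguably clearer; both rely on the same underlying facts (the win condition and discreteness of allocations), so neither buys extra generality.
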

\begin{proof}[Proof by contradiction]
    Suppose $\pi_\omega+\delta_p-1 < 0$. This gives two cases:
    \begin{description}
        \item[Case 1:]
        Let $\delta_p=1$ (i.e., $p$ wins draws), thus $\pi_\omega < 0$. However, $\pi_i\geq 0$ for all $i\in I$ by definition. This is a contradiction
        \item[Case 2:]
        Let $\delta_p=0$ (i.e., $p$ loses draws), thus $\pi_\omega < 1$. Because $\pi_i\in\mathbb{N}_0$ for all $i\in I$ by definition, that means $\pi_\omega=0$. However, if $p$ won battlefield $w$, then $\pi_\omega+\delta>\phi_\omega$, or $\phi_\omega<\pi_\omega = 0$, thus $\phi_\omega < 0$. However, $\phi_i\in\mathbb{N}_0$ for all $i\in I$ by definition. This is a contradiction.        
    \end{description}
    Because both cases fail, $\pi_\omega+\delta_p-1 \not< 0$, therefore $\pi_\omega+\delta_p-1 \geq 0$ for any battlefield $\omega\in\Omega$.
\end{proof}

Given the bound identified in \cref{lem:general upper bound}, it is unintuitive whether this bound is ever tighter than our existing bounds from \cref{prop:lose base lower bound,prop:win base lower bound}. This being the case, we attempt to identify on what, if any, conditions it is tighter. We start by focusing on the case where $i\in\Lambda$, which will later be used in \cref{thm:tight upper bound}:

\lemlosealtlowerbound*
\begin{proof}
    Suppose $i\in\Lambda$, we want to compare the bound identified in \cref{lem:general lower bound} against $\underline{\phi}_i=\pi_i+\delta_p$ (\cref{prop:lose base lower bound}). Thus we wish to verify
    \begin{equation}
    \begin{alignedat}{3}
            && i\in\Lambda \wedge \pi_i+\delta_p &< \pi_i\cdot\mathds{1}_\Lambda(i) + (|\Lambda|-1-\mathds{1}_\Lambda(i))\left(\sum_{\lambda\in\Lambda}[\pi_\lambda+\delta_p]- N_{p'}\right) - \sum_{\omega\in\Omega\setminus\{i\}}[\pi_\omega+\delta_p-1] \\
        \implies && \pi_i+\delta_p &< N_{p'} + \pi_i + (|\Lambda|-2)\left(\sum_{\lambda\in\Lambda}[\pi_\lambda+\delta_p] - N_{p'}\right) - \sum_{\omega\in\Omega}[\pi_\omega+\delta_p-1] \\
        \iff && \delta_p &< N_{p'} + (|\Lambda|-2)\left(\sum_{\lambda\in\Lambda}[\pi_\lambda+\delta_p] - N_{p'}\right) - \sum_{\omega\in\Omega}[\pi_\omega+\delta_p-1].
    \end{alignedat}
    \end{equation}
    Notice that \cref{lem:win geq 0} implies that $- \sum_{\omega\in\Omega}[\pi_\omega+\delta_p-1] \leq 0$. Additionally, consider that we can expand $N_{p'}$ based on \cref{prop:expanded N}:
    \begin{equation}
    \begin{alignedat}{3}
            && \delta_p &< N_{p'} + (|\Lambda|-2)\left(\sum_{\lambda\in\Lambda}[\pi_\lambda+\delta_p] - N_{p'}\right) - \sum_{\omega\in\Omega}[\pi_\omega+\delta_p-1] \\
        \implies && \delta_p &< N_{p'} + (|\Lambda|-2)\left(\sum_{\lambda\in\Lambda}[\pi_\lambda+\delta_p]- N_{p'}\right) & \mathllap{(\text{by \cref{lem:win geq 0}})} \\
        \iff && \delta_p &< (|\Lambda|-2)\left(\sum_{\lambda\in\Lambda}[\pi_\lambda+\delta_p]- \sum_{\lambda\in\Lambda}\phi_\lambda - \sum_{\omega\in\Omega}\phi_\omega\right) \\ 
        \iff && \delta_p &< (|\Lambda|-2)\left(\sum_{\lambda\in\Lambda}[\pi_\lambda+\delta_p - \phi_\lambda] - \sum_{\omega\in\Omega}\phi_\omega\right). 
    \end{alignedat}
    \end{equation}
    Notice that $\pi_i+\delta_p\leq \phi_i$ for all $i\in\Lambda$ by definition, therefore $\sum_{\lambda\in\Lambda}[\pi_\lambda+\delta_p-\phi_\lambda]\leq 0$. Additionally, $\phi_i\geq 0$ for all $i\in I$ by definition, therefore $-\sum_{\omega\in\Omega}\phi_\omega \leq 0$. Thus $\sum_{\lambda\in\Lambda}[\pi_\lambda+\delta_p - \phi_\lambda] - \sum_{\omega\in\Omega}\phi_\lambda \leq 0$. Therefore because $\delta_p\in\{0,1\}$, the inequality can only hold if $|\Lambda|-2 < 0$, which, given $i\in\Lambda$ is only possible when $|\Lambda|=1$ (i.e., $\Lambda = \{i\}$).

    In the circumstance where $\Lambda = \{i\}$, we can simplify \cref{lem:general lower bound}:
    \begin{equation}
    \begin{alignedat}{3}
            && \Lambda = \{i\} \wedge \underline{\phi}_i &= \pi_i\cdot\mathds{1}_\Lambda(i) + (|\Lambda|-1-\mathds{1}_\Lambda(i))\left(\sum_{\lambda\in\Lambda}[\pi_\lambda+\delta_p]- N_{p'}\right) - \sum_{\omega\in\Omega\setminus\{i\}}[\pi_\omega+\delta_p-1] \\
        \implies && \underline{\phi}_i &= \pi_i - (\pi_i+\delta_p- N_{p'}) - \sum_{\omega\in\Omega}[\pi_\omega+\delta_p-1] \\
        \iff && & = N_{p'} - \delta_p + |\Omega|(1-\delta_p) - \sum_{\omega\in\Omega}\pi_\omega \\
        \iff && & = N_{p'} - \delta_p + |\Omega|(1-\delta_p) - (N_p - \pi_i) \\
        \iff && & = N_{p'}  - N_p + \pi_i - \delta_p + (K-1)(1-\delta_p) && \mathllap{(\Omega = I\setminus\{i\})}.
    \end{alignedat}
    \end{equation}
    We can also simplify the inequality for verifying this bound's usefulness compared against $\underline{\phi}_i=\pi_i+\delta_p$ (\cref{prop:lose base lower bound}):
    \begin{equation}
    \begin{alignedat}{3}
            && \pi_i+\delta_p &< N_{p'} - N_p + \pi_i - \delta_p + (K-1)(1-\delta_p) \\
        \iff && N_p + 2\delta_p &< N_{p'} + (K-1)(1-\delta_p) \\ 
    \end{alignedat}
    \end{equation}
    Therefore given $i\in\Lambda$, the bound $\underline{\phi}_i = N_{p'}  - N_p + \pi_i - \delta_p + (K-1)(1-\delta_p)$ is useful \textit{if and only if} $\Lambda = \{i\}$ and $N_p + 2\delta_p < N_{p'} + (K-1)(1-\delta_p)$.
\end{proof}

We now focus on the case where $i\in\Omega$, which will later be used in \cref{thm:tight upper bound}:

\lemwinaltlowerbound*
\begin{proof}
    Suppose $i\in\Omega$, we want to compare the bound identified in \cref{lem:general lower bound} against $\underline{\phi}_i=0$ (\cref{prop:win base lower bound}). Thus we wish to verify
    \begin{equation}
    \begin{alignedat}{3}
            && i\in\Omega \wedge 0 &< \pi_i\cdot\mathds{1}_\Lambda(i) + (|\Lambda|-1-\mathds{1}_\Lambda(i))\left(\sum_{\lambda\in\Lambda}[\pi_\lambda+\delta_p]- N_{p'}\right) - \sum_{\omega\in\Omega\setminus\{i\}}[\pi_\omega+\delta_p-1] \\
        \implies && 0 &< (|\Lambda|-1)\left(\sum_{\lambda\in\Lambda}[\pi_\lambda+\delta_p]- N_{p'}\right) - \sum_{\omega\in\Omega\setminus\{i\}}[\pi_\omega+\delta_p-1] \\
        \implies && 0 &< (|\Lambda|-1)\left(\sum_{\lambda\in\Lambda}[\pi_\lambda+\delta_p]- N_{p'}\right) & \mathllap{(\text{by \cref{lem:win geq 0}}).}
    \end{alignedat}
    \end{equation}
    
    Notice that we can expand $N_{p'}$ based on \cref{prop:expanded N}:
    \begin{equation}
    \begin{alignedat}{3}
            && 0 &< (|\Lambda|-1)\left(\sum_{\lambda\in\Lambda}[\pi_\lambda+\delta_p]- N_{p'}\right) \\
        \iff && 0 &< (|\Lambda|-1)\left(\sum_{\lambda\in\Lambda}[\pi_\lambda+\delta_p] - \sum_{\lambda\in\Lambda}\phi_\lambda - \sum_{\omega\in\Omega}\phi_i\right) \\
        \iff && 0 &< (|\Lambda|-1)\left(\sum_{\lambda\in\Lambda}[\pi_\lambda+\delta_p-\phi_\lambda] - \sum_{\omega\in\Omega}\phi_i\right).
    \end{alignedat}
    \end{equation}
    Recall that $\pi_i+\delta_p\leq \phi_i$ for all $i\in\Lambda$ by definition, therefore $\sum_{\lambda\in\Lambda}[\pi_\lambda+\delta_p-\phi_\lambda]\leq 0$. Additionally, $\phi_i\geq 0$ for all $i\in I$ by definition, therefore $-\sum_{\omega\in\Omega}\phi_i \leq 0$. Therefore the inequality can only hold if $|\Lambda|-1 < 0$, which is only possible when $|\Lambda|=0$, in which case $\Lambda=\varnothing$.
    
    In the circumstance where $i\in\Omega$ and $\Lambda=\varnothing$, we can simplify \cref{lem:general lower bound}:
    \begin{equation}\label{eq:alt lower bound}
    \begin{alignedat}{3}
            & \mathrlap{i\in\Omega \wedge \Lambda=\varnothing} \\
            && \quad \wedge\ \underline{\phi}_i &= \pi_i\cdot\mathds{1}_\Lambda(i) + (|\Lambda|-1-\mathds{1}_\Lambda(i))\left(\sum_{\lambda\in\Lambda}[\pi_\lambda+\delta_p]- N_{p'}\right) - \sum_{\omega\in\Omega\setminus\{i\}}[\pi_\omega+\delta_p-1] \\
        \implies && \underline{\phi}_i &= N_{p'} - \sum_{\omega\in\Omega\setminus\{i\}}[\pi_\omega+\delta_p-1] \\
        \iff && &= N_{p'} + (|\Omega|-1)(1-\delta_p) - \sum_{\omega\in\Omega\setminus\{i\}}\pi_\omega \\
        \iff && &= N_{p'} + (|\Omega|-1)(1-\delta_p) - \left(\sum_{\omega\in\Omega}\pi_\omega - \pi_i\right)\\
        \iff && &= N_{p'} + (|\Omega|-1)(1-\delta_p) - N_p + \pi_i \\
        \iff && &= N_{p'} - N_p + \pi_i + (K-1)(1-\delta_p) & \mathllap{(\Omega=I)}.
    \end{alignedat}
    \end{equation}
    We can also simplify the inequality for verifying this bound's usefulness compared against $\underline{\phi}_i=0$ (\cref{prop:win base lower bound}):
    \begin{equation}
    \begin{alignedat}{3}
            && 0 &< N_{p'} - N_p + \pi_i + (K-1)(1-\delta_p) \\
        \iff && N_p &< N_{p'} + \pi_i + (K-1)(1-\delta_p)
    \end{alignedat}
    \end{equation}
    Therefore given $i\in\Omega$, the bound $\underline{\phi}_i = N_{p'} - N_p + \pi_i + (K-1)(1-\delta_p)$ is useful \textit{if and only if} $\Lambda=\varnothing$ and $N_p < N_{p'} + \pi_i + (K-1)(1-\delta_p)$.
\end{proof}

Notably, we can easily reconcile the lower bounds from \cref{lem:lose alt lower bound,lem:win alt lower bound}:

\thmtightlowerbound*
\begin{proof}
    It follows trivially from combining \cref{lem:lose alt lower bound,lem:win alt lower bound}.
\end{proof}



\section{Empirical Results}


The tables in this section focus on the error with respect to each payoff estimation metric. The sub-tables show the following metrics computer over the course of the game: (a) The Normalized Root Mean-Squared Error (NRMSE) $\pm$ the Relative Residual Standard Deviation (RRSD) between Observable Expected Payoff and True Expected Payoff; (b) the NRMSE $\pm$ RRSD between Observable Max Payoff and True Max Payoff over the course of the game; (c) the NMRSE $\pm$ RRSD between Supremum Payoff and True Payoff. The values presented are observed by the player designated by the rows against the player designated by the columns.

\twocolumn

\begin{table}[htb!p]
\begin{subtable}[h]{\linewidth}
\centering
\caption{Observable Expected Payoff Normalized Error}
\resizebox{\linewidth}{!}{
\begin{tabular}{rcccc}
\toprule
 & \ttsc{MARA} & \ttsc{CUCB-DRA} & \ttsc{Edge} & Random \\
\midrule
\ttsc{MARA} & $.027\pm.019$ & $.114\pm.091$ & $.085\pm.085$ & $.088\pm.088$ \\
\ttsc{CUCB-DRA} & $.163\pm.078$ & $.096\pm.093$ & $.085\pm.085$ & $.086\pm.086$ \\
\ttsc{Edge} & $.127\pm.068$ & $.107\pm.094$ & $.081\pm.081$ & $.080\pm.080$ \\
Random & $.119\pm.076$ & $.107\pm.092$ & $.082\pm.082$ & $.082\pm.082$ \\
\bottomrule
\end{tabular}
}
\end{subtable}

\bigskip

\begin{subtable}[h]{\linewidth}
\centering
\caption{Observable Max Payoff Normalized Error}
\resizebox{\linewidth}{!}{
\begin{tabular}{rcccc}
\toprule
 & \ttsc{MARA} & \ttsc{CUCB-DRA} & \ttsc{Edge} & Random \\
\midrule
\ttsc{MARA} & $.000\pm.000$ & $.000\pm.000$ & $.000\pm.000$ & $.000\pm.000$ \\
\ttsc{CUCB-DRA} & $.000\pm.000$ & $.000\pm.000$ & $.000\pm.000$ & $.000\pm.000$ \\
\ttsc{Edge} & $.000\pm.000$ & $.000\pm.000$ & $.000\pm.000$ & $.000\pm.000$ \\
Random & $.000\pm.000$ & $.000\pm.000$ & $.000\pm.000$ & $.000\pm.000$ \\
\bottomrule
\end{tabular}
}
\end{subtable}

\bigskip

\begin{subtable}[h]{\linewidth}
\centering
\caption{Supremum Payoff Normalized Error}
\resizebox{\linewidth}{!}{
\begin{tabular}{rcccc}
\toprule
 & \ttsc{MARA} & \ttsc{CUCB-DRA} & \ttsc{Edge} & Random \\
\midrule
\ttsc{MARA} & $.000\pm.000$ & $.000\pm.000$ & $.000\pm.000$ & $.000\pm.000$ \\
\ttsc{CUCB-DRA} & $.000\pm.000$ & $.000\pm.000$ & $.000\pm.000$ & $.000\pm.000$ \\
\ttsc{Edge} & $.000\pm.000$ & $.000\pm.000$ & $.000\pm.000$ & $.000\pm.000$ \\
Random & $.000\pm.000$ & $.000\pm.000$ & $.000\pm.000$ & $.000\pm.000$ \\
\bottomrule
\end{tabular}
}
\end{subtable}
\caption{Empirical results focusing on player A (rows) versus player B (columns) for games with $T=1000$, $K=3$, $N_A=10$, and $N_B=10$.}
\label{tab:first result table}
\end{table}


\begin{table}[htb!p]
\begin{subtable}[h]{\linewidth}
\centering
\caption{Observable Expected Payoff Normalized Error}
\resizebox{\linewidth}{!}{
\begin{tabular}{rcccc}
\toprule
 & \ttsc{MARA} & \ttsc{CUCB-DRA} & \ttsc{Edge} & Random \\
\midrule
\ttsc{MARA} & $.016\pm.016$ & $.074\pm.069$ & $.058\pm.058$ & $.059\pm.059$ \\
\ttsc{CUCB-DRA} & $.133\pm.057$ & $.088\pm.079$ & $.063\pm.063$ & $.061\pm.061$ \\
\ttsc{Edge} & $.130\pm.056$ & $.089\pm.073$ & $.067\pm.067$ & $.064\pm.064$ \\
Random & $.123\pm.060$ & $.083\pm.073$ & $.063\pm.063$ & $.066\pm.066$ \\
\bottomrule
\end{tabular}
}
\end{subtable}

\bigskip

\begin{subtable}[h]{\linewidth}
\centering
\caption{Observable Max Payoff Normalized Error}
\resizebox{\linewidth}{!}{
\begin{tabular}{rcccc}
\toprule
 & \ttsc{MARA} & \ttsc{CUCB-DRA} & \ttsc{Edge} & Random \\
\midrule
\ttsc{MARA} & $.000\pm.000$ & $.000\pm.000$ & $.000\pm.000$ & $.000\pm.000$ \\
\ttsc{CUCB-DRA} & $.000\pm.000$ & $.000\pm.000$ & $.000\pm.000$ & $.000\pm.000$ \\
\ttsc{Edge} & $.000\pm.000$ & $.000\pm.000$ & $.000\pm.000$ & $.000\pm.000$ \\
Random & $.000\pm.000$ & $.000\pm.000$ & $.000\pm.000$ & $.000\pm.000$ \\
\bottomrule
\end{tabular}
}
\end{subtable}

\bigskip

\begin{subtable}[h]{\linewidth}
\centering
\caption{Supremum Payoff Normalized Error}
\resizebox{\linewidth}{!}{
\begin{tabular}{rcccc}
\toprule
 & \ttsc{MARA} & \ttsc{CUCB-DRA} & \ttsc{Edge} & Random \\
\midrule
\ttsc{MARA} & $.000\pm.000$ & $.000\pm.000$ & $.000\pm.000$ & $.000\pm.000$ \\
\ttsc{CUCB-DRA} & $.000\pm.000$ & $.000\pm.000$ & $.000\pm.000$ & $.000\pm.000$ \\
\ttsc{Edge} & $.000\pm.000$ & $.000\pm.000$ & $.000\pm.000$ & $.000\pm.000$ \\
Random & $.000\pm.000$ & $.000\pm.000$ & $.000\pm.000$ & $.000\pm.000$ \\
\bottomrule
\end{tabular}
}
\end{subtable}
\caption{Empirical results focusing on player B (rows) versus player A (columns) for games with $T=1000$, $K=3$, $N_A=10$, and $N_B=10$.}
\end{table}


\begin{table}[htb!p]
\begin{subtable}[h]{\linewidth}
\centering
\caption{Observable Expected Payoff Normalized Error}
\resizebox{\linewidth}{!}{
\begin{tabular}{rcccc}
\toprule
 & \ttsc{MARA} & \ttsc{CUCB-DRA} & \ttsc{Edge} & Random \\
\midrule
\ttsc{MARA} & $.014\pm.011$ & $.038\pm.033$ & $.033\pm.033$ & $.030\pm.030$ \\
\ttsc{CUCB-DRA} & $.076\pm.028$ & $.040\pm.038$ & $.032\pm.032$ & $.032\pm.032$ \\
\ttsc{Edge} & $.072\pm.029$ & $.040\pm.037$ & $.030\pm.030$ & $.031\pm.031$ \\
Random & $.072\pm.031$ & $.037\pm.037$ & $.030\pm.030$ & $.030\pm.030$ \\
\bottomrule
\end{tabular}
}
\end{subtable}

\bigskip

\begin{subtable}[h]{\linewidth}
\centering
\caption{Observable Max Payoff Normalized Error}
\resizebox{\linewidth}{!}{
\begin{tabular}{rcccc}
\toprule
 & \ttsc{MARA} & \ttsc{CUCB-DRA} & \ttsc{Edge} & Random \\
\midrule
\ttsc{MARA} & $.000\pm.000$ & $.000\pm.000$ & $.000\pm.000$ & $.000\pm.000$ \\
\ttsc{CUCB-DRA} & $.000\pm.000$ & $.000\pm.000$ & $.000\pm.000$ & $.000\pm.000$ \\
\ttsc{Edge} & $.000\pm.000$ & $.000\pm.000$ & $.000\pm.000$ & $.000\pm.000$ \\
Random & $.000\pm.000$ & $.000\pm.000$ & $.000\pm.000$ & $.000\pm.000$ \\
\bottomrule
\end{tabular}
}
\end{subtable}

\bigskip

\begin{subtable}[h]{\linewidth}
\centering
\caption{Supremum Payoff Normalized Error}
\resizebox{\linewidth}{!}{
\begin{tabular}{rcccc}
\toprule
 & \ttsc{MARA} & \ttsc{CUCB-DRA} & \ttsc{Edge} & Random \\
\midrule
\ttsc{MARA} & $.000\pm.000$ & $.000\pm.000$ & $.000\pm.000$ & $.000\pm.000$ \\
\ttsc{CUCB-DRA} & $.000\pm.000$ & $.000\pm.000$ & $.000\pm.000$ & $.000\pm.000$ \\
\ttsc{Edge} & $.000\pm.000$ & $.000\pm.000$ & $.000\pm.000$ & $.000\pm.000$ \\
Random & $.000\pm.000$ & $.000\pm.000$ & $.000\pm.000$ & $.000\pm.000$ \\
\bottomrule
\end{tabular}
}
\end{subtable}
\caption{Empirical results focusing on player A (rows) versus player B (columns) for games with $T=1000$, $K=3$, $N_A=15$, and $N_B=10$.}
\end{table}


\begin{table}[htb!p]
\begin{subtable}[h]{\linewidth}
\centering
\caption{Observable Expected Payoff Normalized Error}
\resizebox{\linewidth}{!}{
\begin{tabular}{rcccc}
\toprule
 & \ttsc{MARA} & \ttsc{CUCB-DRA} & \ttsc{Edge} & Random \\
\midrule
\ttsc{MARA} & $.139\pm.055$ & $.177\pm.170$ & $.156\pm.156$ & $.155\pm.155$ \\
\ttsc{CUCB-DRA} & $.112\pm.106$ & $.180\pm.163$ & $.158\pm.158$ & $.152\pm.152$ \\
\ttsc{Edge} & $.104\pm.103$ & $.187\pm.146$ & $.158\pm.158$ & $.156\pm.156$ \\
Random & $.101\pm.100$ & $.184\pm.148$ & $.149\pm.149$ & $.155\pm.155$ \\
\bottomrule
\end{tabular}
}
\end{subtable}

\bigskip

\begin{subtable}[h]{\linewidth}
\centering
\caption{Observable Max Payoff Normalized Error}
\resizebox{\linewidth}{!}{
\begin{tabular}{rcccc}
\toprule
 & \ttsc{MARA} & \ttsc{CUCB-DRA} & \ttsc{Edge} & Random \\
\midrule
\ttsc{MARA} & $.000\pm.000$ & $.000\pm.000$ & $.000\pm.000$ & $.000\pm.000$ \\
\ttsc{CUCB-DRA} & $.000\pm.000$ & $.000\pm.000$ & $.000\pm.000$ & $.000\pm.000$ \\
\ttsc{Edge} & $.000\pm.000$ & $.000\pm.000$ & $.000\pm.000$ & $.000\pm.000$ \\
Random & $.000\pm.000$ & $.000\pm.000$ & $.000\pm.000$ & $.000\pm.000$ \\
\bottomrule
\end{tabular}
}
\end{subtable}

\bigskip

\begin{subtable}[h]{\linewidth}
\centering
\caption{Supremum Payoff Normalized Error}
\resizebox{\linewidth}{!}{
\begin{tabular}{rcccc}
\toprule
 & \ttsc{MARA} & \ttsc{CUCB-DRA} & \ttsc{Edge} & Random \\
\midrule
\ttsc{MARA} & $.000\pm.000$ & $.000\pm.000$ & $.000\pm.000$ & $.000\pm.000$ \\
\ttsc{CUCB-DRA} & $.000\pm.000$ & $.000\pm.000$ & $.000\pm.000$ & $.000\pm.000$ \\
\ttsc{Edge} & $.000\pm.000$ & $.000\pm.000$ & $.000\pm.000$ & $.000\pm.000$ \\
Random & $.000\pm.000$ & $.000\pm.000$ & $.000\pm.000$ & $.000\pm.000$ \\
\bottomrule
\end{tabular}
}
\end{subtable}
\caption{Empirical results focusing on player B (rows) versus player A (columns) for games with $T=1000$, $K=3$, $N_A=15$, and $N_B=10$.}
\end{table}


\begin{table}[htb!p]
\begin{subtable}[h]{\linewidth}
\centering
\caption{Observable Expected Payoff Normalized Error}
\resizebox{\linewidth}{!}{
\begin{tabular}{rcccc}
\toprule
 & \ttsc{MARA} & \ttsc{CUCB-DRA} & \ttsc{Edge} & Random \\
\midrule
\ttsc{MARA} & $.029\pm.014$ & $.112\pm.103$ & $.083\pm.083$ & $.082\pm.082$ \\
\ttsc{CUCB-DRA} & $.175\pm.082$ & $.102\pm.097$ & $.081\pm.081$ & $.080\pm.080$ \\
\ttsc{Edge} & $.110\pm.061$ & $.113\pm.093$ & $.079\pm.079$ & $.077\pm.077$ \\
Random & $.121\pm.064$ & $.102\pm.088$ & $.076\pm.076$ & $.079\pm.079$ \\
\bottomrule
\end{tabular}
}
\end{subtable}

\bigskip

\begin{subtable}[h]{\linewidth}
\centering
\caption{Observable Max Payoff Normalized Error}
\resizebox{\linewidth}{!}{
\begin{tabular}{rcccc}
\toprule
 & \ttsc{MARA} & \ttsc{CUCB-DRA} & \ttsc{Edge} & Random \\
\midrule
\ttsc{MARA} & $.000\pm.000$ & $.000\pm.000$ & $.000\pm.000$ & $.000\pm.000$ \\
\ttsc{CUCB-DRA} & $.000\pm.000$ & $.000\pm.000$ & $.000\pm.000$ & $.000\pm.000$ \\
\ttsc{Edge} & $.000\pm.000$ & $.000\pm.000$ & $.000\pm.000$ & $.000\pm.000$ \\
Random & $.000\pm.000$ & $.000\pm.000$ & $.000\pm.000$ & $.000\pm.000$ \\
\bottomrule
\end{tabular}
}
\end{subtable}

\bigskip

\begin{subtable}[h]{\linewidth}
\centering
\caption{Supremum Payoff Normalized Error}
\resizebox{\linewidth}{!}{
\begin{tabular}{rcccc}
\toprule
 & \ttsc{MARA} & \ttsc{CUCB-DRA} & \ttsc{Edge} & Random \\
\midrule
\ttsc{MARA} & $.000\pm.000$ & $.000\pm.000$ & $.000\pm.000$ & $.000\pm.000$ \\
\ttsc{CUCB-DRA} & $.000\pm.000$ & $.000\pm.000$ & $.000\pm.000$ & $.000\pm.000$ \\
\ttsc{Edge} & $.000\pm.000$ & $.000\pm.000$ & $.000\pm.000$ & $.000\pm.000$ \\
Random & $.000\pm.000$ & $.000\pm.000$ & $.000\pm.000$ & $.000\pm.000$ \\
\bottomrule
\end{tabular}
}
\end{subtable}
\caption{Empirical results focusing on player A (rows) versus player B (columns) for games with $T=1000$, $K=3$, $N_A=15$, and $N_B=15$.}
\end{table}


\begin{table}[htb!p]
\begin{subtable}[h]{\linewidth}
\centering
\caption{Observable Expected Payoff Normalized Error}
\resizebox{\linewidth}{!}{
\begin{tabular}{rcccc}
\toprule
 & \ttsc{MARA} & \ttsc{CUCB-DRA} & \ttsc{Edge} & Random \\
\midrule
\ttsc{MARA} & $.015\pm.010$ & $.084\pm.080$ & $.064\pm.064$ & $.066\pm.066$ \\
\ttsc{CUCB-DRA} & $.135\pm.067$ & $.096\pm.087$ & $.062\pm.062$ & $.065\pm.065$ \\
\ttsc{Edge} & $.128\pm.060$ & $.093\pm.081$ & $.070\pm.070$ & $.066\pm.066$ \\
Random & $.119\pm.059$ & $.093\pm.082$ & $.066\pm.066$ & $.067\pm.067$ \\
\bottomrule
\end{tabular}
}
\end{subtable}

\bigskip

\begin{subtable}[h]{\linewidth}
\centering
\caption{Observable Max Payoff Normalized Error}
\resizebox{\linewidth}{!}{
\begin{tabular}{rcccc}
\toprule
 & \ttsc{MARA} & \ttsc{CUCB-DRA} & \ttsc{Edge} & Random \\
\midrule
\ttsc{MARA} & $.000\pm.000$ & $.000\pm.000$ & $.000\pm.000$ & $.000\pm.000$ \\
\ttsc{CUCB-DRA} & $.000\pm.000$ & $.000\pm.000$ & $.000\pm.000$ & $.000\pm.000$ \\
\ttsc{Edge} & $.000\pm.000$ & $.000\pm.000$ & $.000\pm.000$ & $.000\pm.000$ \\
Random & $.000\pm.000$ & $.000\pm.000$ & $.000\pm.000$ & $.000\pm.000$ \\
\bottomrule
\end{tabular}
}
\end{subtable}

\bigskip

\begin{subtable}[h]{\linewidth}
\centering
\caption{Supremum Payoff Normalized Error}
\resizebox{\linewidth}{!}{
\begin{tabular}{rcccc}
\toprule
 & \ttsc{MARA} & \ttsc{CUCB-DRA} & \ttsc{Edge} & Random \\
\midrule
\ttsc{MARA} & $.000\pm.000$ & $.000\pm.000$ & $.000\pm.000$ & $.000\pm.000$ \\
\ttsc{CUCB-DRA} & $.000\pm.000$ & $.000\pm.000$ & $.000\pm.000$ & $.000\pm.000$ \\
\ttsc{Edge} & $.000\pm.000$ & $.000\pm.000$ & $.000\pm.000$ & $.000\pm.000$ \\
Random & $.000\pm.000$ & $.000\pm.000$ & $.000\pm.000$ & $.000\pm.000$ \\
\bottomrule
\end{tabular}
}
\end{subtable}
\caption{Empirical results focusing on player B (rows) versus player A (columns) for games with $T=1000$, $K=3$, $N_A=15$, and $N_B=15$.}
\end{table}


\begin{table}[htb!p]
\begin{subtable}[h]{\linewidth}
\centering
\caption{Observable Expected Payoff Normalized Error}
\resizebox{\linewidth}{!}{
\begin{tabular}{rcccc}
\toprule
 & \ttsc{MARA} & \ttsc{CUCB-DRA} & \ttsc{Edge} & Random \\
\midrule
\ttsc{MARA} & $.047\pm.029$ & $.211\pm.108$ & $.100\pm.100$ & $.102\pm.102$ \\
\ttsc{CUCB-DRA} & $.255\pm.102$ & $.090\pm.088$ & $.096\pm.096$ & $.095\pm.096$ \\
\ttsc{Edge} & $.112\pm.058$ & $.132\pm.120$ & $.087\pm.087$ & $.089\pm.089$ \\
Random & $.124\pm.058$ & $.151\pm.117$ & $.088\pm.088$ & $.091\pm.091$ \\
\bottomrule
\end{tabular}
}
\end{subtable}

\bigskip

\begin{subtable}[h]{\linewidth}
\centering
\caption{Observable Max Payoff Normalized Error}
\label{subtab:non-zero max payoff 1}
\resizebox{\linewidth}{!}{
\begin{tabular}{rcccc}
\toprule
 & \ttsc{MARA} & \ttsc{CUCB-DRA} & \ttsc{Edge} & Random \\
\midrule
\ttsc{MARA} & $.008\pm.008$ & $.000\pm.000$ & $.000\pm.000$ & $.000\pm.000$ \\
\ttsc{CUCB-DRA} & $.008\pm.008$ & $.008\pm.008$ & $.000\pm.000$ & $.000\pm.000$ \\
\ttsc{Edge} & $.008\pm.008$ & $.000\pm.000$ & $.000\pm.000$ & $.011\pm.011$ \\
Random & $.008\pm.008$ & $.000\pm.000$ & $.000\pm.000$ & $.008\pm.008$ \\
\bottomrule
\end{tabular}
}
\end{subtable}

\bigskip

\begin{subtable}[h]{\linewidth}
\centering
\caption{Supremum Payoff Normalized Error}
\label{subtab:high supremum payoff A}
\resizebox{\linewidth}{!}{
\begin{tabular}{rcccc}
\toprule
 & \ttsc{MARA} & \ttsc{CUCB-DRA} & \ttsc{Edge} & Random \\
\midrule
\ttsc{MARA} & $.186\pm.124$ & $.182\pm.125$ & $.204\pm.118$ & $.209\pm.114$ \\
\ttsc{CUCB-DRA} & $.174\pm.125$ & $.213\pm.112$ & $.210\pm.114$ & $.209\pm.115$ \\
\ttsc{Edge} & $.082\pm.078$ & $.119\pm.105$ & $.135\pm.114$ & $.139\pm.115$ \\
Random & $.076\pm.072$ & $.109\pm.098$ & $.127\pm.110$ & $.127\pm.109$ \\
\bottomrule
\end{tabular}
}
\end{subtable}
\caption{Empirical results focusing on player A (rows) versus player B (columns) for games with $T=1000$, $K=5$, $N_A=15$, and $N_B=15$.}
\end{table}


\begin{table}[htb!p]
\begin{subtable}[h]{\linewidth}
\centering
\caption{Observable Expected Payoff Normalized Error}
\resizebox{\linewidth}{!}{
\begin{tabular}{rcccc}
\toprule
 & \ttsc{MARA} & \ttsc{CUCB-DRA} & \ttsc{Edge} & Random \\
\midrule
\ttsc{MARA} & $.041\pm.018$ & $.112\pm.082$ & $.076\pm.076$ & $.072\pm.072$ \\
\ttsc{CUCB-DRA} & $.039\pm.039$ & $.163\pm.084$ & $.075\pm.075$ & $.078\pm.078$ \\
\ttsc{Edge} & $.115\pm.052$ & $.168\pm.080$ & $.079\pm.079$ & $.077\pm.077$ \\
Random & $.111\pm.051$ & $.169\pm.082$ & $.076\pm.076$ & $.081\pm.081$ \\
\bottomrule
\end{tabular}
}
\end{subtable}

\bigskip

\begin{subtable}[h]{\linewidth}
\centering
\caption{Observable Max Payoff Normalized Error}
\resizebox{\linewidth}{!}{
\begin{tabular}{rcccc}
\toprule
 & \ttsc{MARA} & \ttsc{CUCB-DRA} & \ttsc{Edge} & Random \\
\midrule
\ttsc{MARA} & $.000\pm.000$ & $.000\pm.000$ & $.000\pm.000$ & $.000\pm.000$ \\
\ttsc{CUCB-DRA} & $.000\pm.000$ & $.000\pm.000$ & $.000\pm.000$ & $.000\pm.000$ \\
\ttsc{Edge} & $.000\pm.000$ & $.000\pm.000$ & $.000\pm.000$ & $.000\pm.000$ \\
Random & $.000\pm.000$ & $.000\pm.000$ & $.000\pm.000$ & $.000\pm.000$ \\
\bottomrule
\end{tabular}
}
\end{subtable}

\bigskip

\begin{subtable}[h]{\linewidth}
\centering
\caption{Supremum Payoff Normalized Error}
\resizebox{\linewidth}{!}{
\begin{tabular}{rcccc}
\toprule
 & \ttsc{MARA} & \ttsc{CUCB-DRA} & \ttsc{Edge} & Random \\
\midrule
\ttsc{MARA} & $.000\pm.000$ & $.000\pm.000$ & $.000\pm.000$ & $.000\pm.000$ \\
\ttsc{CUCB-DRA} & $.000\pm.000$ & $.000\pm.000$ & $.000\pm.000$ & $.000\pm.000$ \\
\ttsc{Edge} & $.000\pm.000$ & $.000\pm.000$ & $.000\pm.000$ & $.000\pm.000$ \\
Random & $.000\pm.000$ & $.000\pm.000$ & $.000\pm.000$ & $.000\pm.000$ \\
\bottomrule
\end{tabular}
}
\end{subtable}
\caption{Empirical results focusing on player B (rows) versus player A (columns) for games with $T=1000$, $K=5$, $N_A=15$, and $N_B=15$.}
\end{table}


\begin{table}[htb!p]
\begin{subtable}[h]{\linewidth}
\centering
\caption{Observable Expected Payoff Normalized Error}
\resizebox{\linewidth}{!}{
\begin{tabular}{rcccc}
\toprule
 & \ttsc{MARA} & \ttsc{CUCB-DRA} & \ttsc{Edge} & Random \\
\midrule
\ttsc{MARA} & $.028\pm.016$ & $.139\pm.084$ & $.059\pm.059$ & $.055\pm.055$ \\
\ttsc{CUCB-DRA} & $.198\pm.073$ & $.060\pm.058$ & $.060\pm.060$ & $.061\pm.061$ \\
\ttsc{Edge} & $.131\pm.056$ & $.095\pm.086$ & $.056\pm.056$ & $.057\pm.057$ \\
Random & $.126\pm.050$ & $.114\pm.081$ & $.054\pm.054$ & $.059\pm.059$ \\
\bottomrule
\end{tabular}
}
\end{subtable}

\bigskip

\begin{subtable}[h]{\linewidth}
\centering
\caption{Observable Max Payoff Normalized Error}
\resizebox{\linewidth}{!}{
\begin{tabular}{rcccc}
\toprule
 & \ttsc{MARA} & \ttsc{CUCB-DRA} & \ttsc{Edge} & Random \\
\midrule
\ttsc{MARA} & $.000\pm.000$ & $.000\pm.000$ & $.000\pm.000$ & $.000\pm.000$ \\
\ttsc{CUCB-DRA} & $.000\pm.000$ & $.000\pm.000$ & $.000\pm.000$ & $.000\pm.000$ \\
\ttsc{Edge} & $.000\pm.000$ & $.000\pm.000$ & $.000\pm.000$ & $.000\pm.000$ \\
Random & $.000\pm.000$ & $.000\pm.000$ & $.000\pm.000$ & $.000\pm.000$ \\
\bottomrule
\end{tabular}
}
\end{subtable}

\bigskip

\begin{subtable}[h]{\linewidth}
\centering
\caption{Supremum Payoff Normalized Error}
\resizebox{\linewidth}{!}{
\begin{tabular}{rcccc}
\toprule
 & \ttsc{MARA} & \ttsc{CUCB-DRA} & \ttsc{Edge} & Random \\
\midrule
\ttsc{MARA} & $.000\pm.000$ & $.000\pm.000$ & $.000\pm.000$ & $.000\pm.000$ \\
\ttsc{CUCB-DRA} & $.000\pm.000$ & $.000\pm.000$ & $.000\pm.000$ & $.000\pm.000$ \\
\ttsc{Edge} & $.000\pm.000$ & $.000\pm.000$ & $.000\pm.000$ & $.000\pm.000$ \\
Random & $.000\pm.000$ & $.000\pm.000$ & $.000\pm.000$ & $.000\pm.000$ \\
\bottomrule
\end{tabular}
}
\end{subtable}
\caption{Empirical results focusing on player A (rows) versus player B (columns) for games with $T=1000$, $K=5$, $N_A=20$, and $N_B=15$.}
\end{table}


\begin{table}[htb!p]
\begin{subtable}[h]{\linewidth}
\centering
\caption{Observable Expected Payoff Normalized Error}
\resizebox{\linewidth}{!}{
\begin{tabular}{rcccc}
\toprule
 & \ttsc{MARA} & \ttsc{CUCB-DRA} & \ttsc{Edge} & Random \\
\midrule
\ttsc{MARA} & $.073\pm.043$ & $.136\pm.106$ & $.116\pm.115$ & $.118\pm.118$ \\
\ttsc{CUCB-DRA} & $.058\pm.058$ & $.196\pm.091$ & $.113\pm.113$ & $.113\pm.113$ \\
\ttsc{Edge} & $.134\pm.069$ & $.203\pm.093$ & $.117\pm.117$ & $.119\pm.119$ \\
Random & $.135\pm.069$ & $.205\pm.092$ & $.114\pm.114$ & $.121\pm.121$ \\
\bottomrule
\end{tabular}
}
\end{subtable}

\bigskip

\begin{subtable}[h]{\linewidth}
\centering
\caption{Observable Max Payoff Normalized Error}
\label{subtab:non-zero max payoff 2}
\resizebox{\linewidth}{!}{
\begin{tabular}{rcccc}
\toprule
 & \ttsc{MARA} & \ttsc{CUCB-DRA} & \ttsc{Edge} & Random \\
\midrule
\ttsc{MARA} & $.008\pm.008$ & $.000\pm.000$ & $.000\pm.000$ & $.000\pm.000$ \\
\ttsc{CUCB-DRA} & $.008\pm.008$ & $.000\pm.000$ & $.000\pm.000$ & $.000\pm.000$ \\
\ttsc{Edge} & $.008\pm.008$ & $.000\pm.000$ & $.000\pm.000$ & $.000\pm.000$ \\
Random & $.008\pm.008$ & $.000\pm.000$ & $.000\pm.000$ & $.000\pm.000$ \\
\bottomrule
\end{tabular}
}
\end{subtable}

\bigskip

\begin{subtable}[h]{\linewidth}
\centering
\caption{Supremum Payoff Normalized Error}
\label{subtab:high supremum payoff B}
\resizebox{\linewidth}{!}{
\begin{tabular}{rcccc}
\toprule
 & \ttsc{MARA} & \ttsc{CUCB-DRA} & \ttsc{Edge} & Random \\
\midrule
\ttsc{MARA} & $.221\pm.104$ & $.016\pm.016$ & $.138\pm.115$ & $.130\pm.111$ \\
\ttsc{CUCB-DRA} & $.092\pm.086$ & $.031\pm.030$ & $.111\pm.099$ & $.129\pm.111$ \\
\ttsc{Edge} & $.033\pm.032$ & $.032\pm.031$ & $.103\pm.094$ & $.105\pm.095$ \\
Random & $.037\pm.037$ & $.032\pm.031$ & $.100\pm.092$ & $.110\pm.099$ \\
\bottomrule
\end{tabular}
}
\end{subtable}
\caption{Empirical results focusing on player B (rows) versus player A (columns) for games with $T=1000$, $K=5$, $N_A=20$, and $N_B=15$.}
\end{table}


\begin{table}[htb!p]
\begin{subtable}[h]{\linewidth}
\centering
\caption{Observable Expected Payoff Normalized Error}
\resizebox{\linewidth}{!}{
\begin{tabular}{rcccc}
\toprule
 & \ttsc{MARA} & \ttsc{CUCB-DRA} & \ttsc{Edge} & Random \\
\midrule
\ttsc{MARA} & $.084\pm.020$ & $.226\pm.112$ & $.100\pm.100$ & $.097\pm.097$ \\
\ttsc{CUCB-DRA} & $.259\pm.105$ & $.093\pm.088$ & $.097\pm.097$ & $.096\pm.096$ \\
\ttsc{Edge} & $.120\pm.054$ & $.146\pm.127$ & $.091\pm.091$ & $.093\pm.093$ \\
Random & $.116\pm.054$ & $.175\pm.103$ & $.089\pm.089$ & $.091\pm.091$ \\
\bottomrule
\end{tabular}
}
\end{subtable}

\bigskip

\begin{subtable}[h]{\linewidth}
\centering
\caption{Observable Max Payoff Normalized Error}
\resizebox{\linewidth}{!}{
\begin{tabular}{rcccc}
\toprule
 & \ttsc{MARA} & \ttsc{CUCB-DRA} & \ttsc{Edge} & Random \\
\midrule
\ttsc{MARA} & $.000\pm.000$ & $.000\pm.000$ & $.000\pm.000$ & $.000\pm.000$ \\
\ttsc{CUCB-DRA} & $.000\pm.000$ & $.000\pm.000$ & $.000\pm.000$ & $.000\pm.000$ \\
\ttsc{Edge} & $.000\pm.000$ & $.000\pm.000$ & $.000\pm.000$ & $.000\pm.000$ \\
Random & $.000\pm.000$ & $.000\pm.000$ & $.000\pm.000$ & $.000\pm.000$ \\
\bottomrule
\end{tabular}
}
\end{subtable}

\bigskip

\begin{subtable}[h]{\linewidth}
\centering
\caption{Supremum Payoff Normalized Error}
\resizebox{\linewidth}{!}{
\begin{tabular}{rcccc}
\toprule
 & \ttsc{MARA} & \ttsc{CUCB-DRA} & \ttsc{Edge} & Random \\
\midrule
\ttsc{MARA} & $.000\pm.000$ & $.000\pm.000$ & $.000\pm.000$ & $.000\pm.000$ \\
\ttsc{CUCB-DRA} & $.000\pm.000$ & $.000\pm.000$ & $.000\pm.000$ & $.000\pm.000$ \\
\ttsc{Edge} & $.000\pm.000$ & $.000\pm.000$ & $.000\pm.000$ & $.000\pm.000$ \\
Random & $.000\pm.000$ & $.000\pm.000$ & $.000\pm.000$ & $.000\pm.000$ \\
\bottomrule
\end{tabular}
}
\end{subtable}
\caption{Empirical results focusing on player A (rows) versus player B (columns) for games with $T=1000$, $K=5$, $N_A=20$, and $N_B=20$.}
\end{table}


\begin{table}[htb!p]
\begin{subtable}[h]{\linewidth}
\centering
\caption{Observable Expected Payoff Normalized Error}
\resizebox{\linewidth}{!}{
\begin{tabular}{rcccc}
\toprule
 & \ttsc{MARA} & \ttsc{CUCB-DRA} & \ttsc{Edge} & Random \\
\midrule
\ttsc{MARA} & $.034\pm.017$ & $.119\pm.086$ & $.080\pm.080$ & $.075\pm.075$ \\
\ttsc{CUCB-DRA} & $.039\pm.037$ & $.167\pm.087$ & $.079\pm.079$ & $.078\pm.078$ \\
\ttsc{Edge} & $.122\pm.054$ & $.175\pm.083$ & $.080\pm.080$ & $.082\pm.082$ \\
Random & $.115\pm.054$ & $.177\pm.084$ & $.080\pm.080$ & $.082\pm.082$ \\
\bottomrule
\end{tabular}
}
\end{subtable}

\bigskip

\begin{subtable}[h]{\linewidth}
\centering
\caption{Observable Max Payoff Normalized Error}
\resizebox{\linewidth}{!}{
\begin{tabular}{rcccc}
\toprule
 & \ttsc{MARA} & \ttsc{CUCB-DRA} & \ttsc{Edge} & Random \\
\midrule
\ttsc{MARA} & $.000\pm.000$ & $.000\pm.000$ & $.000\pm.000$ & $.000\pm.000$ \\
\ttsc{CUCB-DRA} & $.000\pm.000$ & $.000\pm.000$ & $.000\pm.000$ & $.000\pm.000$ \\
\ttsc{Edge} & $.000\pm.000$ & $.000\pm.000$ & $.000\pm.000$ & $.000\pm.000$ \\
Random & $.000\pm.000$ & $.000\pm.000$ & $.000\pm.000$ & $.000\pm.000$ \\
\bottomrule
\end{tabular}
}
\end{subtable}

\bigskip

\begin{subtable}[h]{\linewidth}
\centering
\caption{Supremum Payoff Normalized Error}
\resizebox{\linewidth}{!}{
\begin{tabular}{rcccc}
\toprule
 & \ttsc{MARA} & \ttsc{CUCB-DRA} & \ttsc{Edge} & Random \\
\midrule
\ttsc{MARA} & $.000\pm.000$ & $.000\pm.000$ & $.000\pm.000$ & $.000\pm.000$ \\
\ttsc{CUCB-DRA} & $.000\pm.000$ & $.000\pm.000$ & $.000\pm.000$ & $.000\pm.000$ \\
\ttsc{Edge} & $.000\pm.000$ & $.000\pm.000$ & $.000\pm.000$ & $.000\pm.000$ \\
Random & $.000\pm.000$ & $.000\pm.000$ & $.000\pm.000$ & $.000\pm.000$ \\
\bottomrule
\end{tabular}
}
\end{subtable}
\caption{Empirical results focusing on player B (rows) versus player A (columns) for games with $T=1000$, $K=5$, $N_A=20$, and $N_B=20$.}
\label{tab:last result table}
\end{table}

\end{document}